\newif\iflong
\pgfplotsset{compat=1.18}
\newcommand{\cmark}{{\color{green!70!black}\ding{51}}}%
\newcommand{\xmark}{{\color{red!70!black}\ding{55}}}%
\newcommand{\checkmarkt}[1]{%
	\edef\TVALUE{{#1}}%
	\expandafter\ifstrequal\TVALUE{yes}{\cmark}{}%
	\expandafter\ifstrequal\TVALUE{no}{\xmark}{}%
			\expandafter\ifstrequal\TVALUE{no*}{\xmark $^\star$}{}%
}
\theoremstyle{acmdefinition}
\theoremstyle{acmplain}
\newtheorem*{notation*}{Notation}
\newcommand{\theoremhang}{%
  \begingroup%
    \setlength{\unitlength}{.005\linewidth}%
    \begin{picture}(0,0)(0,0)%
      \linethickness{0.45pt} \color{black!50}%
      \put(-1,2){\line(1,0){201.5}}%
      \put(-1,2){\line(0,-1){4}}%
      \put(200.5,2){\line(0,-1){4}}%
    \end{picture}%
  \endgroup%
}%
\newcommand{\theoremhung}{%
  \begingroup%
    \setlength{\unitlength}{.005\linewidth}%
    \begin{picture}(0,0)(0,0)%
      \linethickness{0.45pt} \color{black!50}%
      \put(-1,0){\line(1,0){201.5}}%
      \put(-1,0){\line(0,1){4}}%
      \put(200.5,0){\line(0,1){4}}%
    \end{picture}%
  \endgroup%
}%
\newcounter{example}[section]
\newenvironment{example}
  {\par\noindent\theoremhang\par\nobreak\vspace{-6pt}\noindent
   \refstepcounter{example}\postdisplaypenalty=10000 %
   {\indent\scshape Example \theexample.}\ \ignorespaces}
  {\par\nobreak\vspace{-6pt}\noindent\theoremhung\par\addvspace{\topsep}}
\numberwithin{example}{section}
\Crefname{section}{Section}{Section}
\crefname{corollary}{\text{Corollary}}{\text{corollaries}}
\Crefname{corollary}{\text{Corollary}}{\text{Corollaries}}
\crefname{lemma}{\text{Lemma}}{\text{Lemmas}}
\Crefname{lemma}{\text{Lemma}}{\text{Lemmas}}
\crefname{proposition}{\text{Prop.}}{\text{Propositions}}
\Crefname{proposition}{\text{Proposition}}{\text{Propositions}}
\crefname{definition}{\text{Def.}}{\text{Definitions}}
\Crefname{definition}{\text{Definition}}{\text{Definitions}}
\crefname{notation}{\text{Notation}}{\text{Notations}}
\Crefname{notation}{\text{Notation}}{\text{Notations}}
\crefname{theorem}{\text{Thm.}}{\text{Theorems}}
\Crefname{theorem}{\text{Theorem}}{\text{Theorems}}
\crefname{figure}{\text{Fig.}}{\text{Figures}}
\Crefname{figure}{\text{Figure}}{\text{Figures}}
\crefname{example}{\text{Ex.}}{\text{Examples}}
\Crefname{example}{\text{Example}}{\text{Examples}}
\newcommand{\noop}[1]{}
\newcounter{claimcounter}
\numberwithin{claimcounter}{theorem}
\crefname{claimcounter}{\text{Claim}}{\text{Claims}}
\Crefname{claimcounter}{\text{Claim}}{\text{Claims}}
\newcommand{\textcode}[1]{\texorpdfstring{\texttt{#1}}{#1}}
\newcommand{\kw}[1]{\textbf{\textcode{#1}}}
\newcommand{\skipc}{\kw{skip}}
\newcommand{\while}[2]{\kw{while}\;#1\;\kw{do}\\ \quad\;#2}
\newcommand{\ie}{{i.e.,} }
\newcommand{\eg}{{e.g.,} }
\newcommand{\aka}{a.k.a.~}
\newcommand{\inarr}[1]{\begin{array}{@{}l@{}}#1\end{array}}
\newcommand{\inarrII}[2]{\begin{array}{@{}l@{~~}||@{~~}l@{}}\inarr{#1}&\inarr{#2}\end{array}}
\newcommand{\set}[1]{\{{#1}\}}
\newcommand{\st}{\; | \;}
\newcommand{\N}{{\mathbb{N}}}
\newcommand{\dom}[1]{\textit{dom}{({#1})}}
\newcommand{\tup}[1]{{\langle{#1}\rangle}}
\newcommand{\nin}{\not\in}
\newcommand{\suq}{\subseteq}
\newcommand{\size}[1]{|{#1}|}
\newcommand{\defeq}{\triangleq}
\newcommand{\raisemath}[1]{\mathpalette{\raisem@th{#1}}}
\newcommand{\raisem@th}[3]{\raisebox{#1}{$#2#3$}}
\newcommandx{\yaHelper}[2][1=\empty]{%
\ifthenelse{\equal{#1}{\empty}}%
  { \ensuremath{ \scriptstyle{ #2 } } } %
  { \raisebox{ #1 }[0pt][0pt]{ \ensuremath{ \scriptstyle{ #2 } } } }  %
}
\newcommandx{\yrightarrow}[4][1=\empty, 2=\empty, 4=\empty, usedefault=@]{%
  \ifthenelse{\equal{#2}{\empty}}
  { \xrightarrow{ \protect{ \yaHelper[ #4 ]{ #3 } } } } %
  { \xrightarrow[ \protect{ \yaHelper[ #2 ]{ #1 } } ]{ \protect{ \yaHelper[ #4 ]{ #3 } } } } %
}
\colorlet{colorPO}{gray!60!black}
\colorlet{colorRF}{green!60!black}
\colorlet{colorMO}{orange}
\colorlet{colorTS}{orange}
\colorlet{colorFR}{purple}
\colorlet{colorECO}{red!80!black}
\colorlet{colorSYN}{green!40!black}
\colorlet{colorHB}{blue}
\colorlet{colorPPO}{magenta}
\colorlet{colorPB}{olive}
\colorlet{colorSBRF}{olive}
\colorlet{colorRMW}{olive!70!black}
\colorlet{colorRSEQ}{blue}
\colorlet{colorSC}{violet}
\colorlet{colorPSC}{violet}
\colorlet{colorREL}{olive}
\colorlet{colorCONFLICT}{olive}
\colorlet{colorRACE}{olive}
\colorlet{colorWB}{orange!70!black}
\colorlet{colorPSC}{violet}
\colorlet{colorSCB}{violet}
\colorlet{colorDEPS}{violet}
\tikzset{
   every path/.style={>=stealth},
   po/.style={->,color=colorPO,thin,shorten >=-0.5mm,shorten <=-0.5mm},
   sw/.style={->,color=colorSYN,shorten >=-0.5mm,shorten <=-0.5mm},
   rf/.style={->,color=colorRF,dashed,thick,shorten >=-0.5mm,shorten <=-0.5mm},
   hb/.style={->,color=colorHB,thick,shorten >=-0.5mm,shorten <=-0.5mm},
   mo/.style={->,color=colorMO,dotted,very thick,shorten >=-0.5mm,shorten <=-0.5mm},
   no/.style={->,dotted,thick,shorten >=-0.5mm,shorten <=-0.5mm},
   fr/.style={->,color=colorFR,dotted,thick,shorten >=-0.5mm,shorten <=-0.5mm},
   deps/.style={->,color=colorDEPS,dotted,thick,shorten >=-0.5mm,shorten <=-0.5mm},
   rmw/.style={->,color=colorRMW,thick,shorten >=-0.5mm,shorten <=-0.5mm},
   revisit/.style={rounded corners,fill=yellow!50!gray},
   revisit2/.style={rounded corners,fill=yellow!50!gray!50!white},
}
\newcommand{\na}{\mathtt{na}}
\newcommand{\rlx}{\mathtt{rlx}}
\newcommand{\rel}{{\mathtt{rel}}}
\newcommand{\acq}{{\mathtt{acq}}}
\newcommand{\acqrel}{{\mathtt{acqrel}}}
\newcommand{\sco}{{\mathtt{sc}}}
\newcommand{\evlab}[4]{{#1}^{#2}({#3},{#4})}
\newcommand{\rlab}[3]{{\lR}^{#1}({#2},{#3})}
\newcommand{\wlab}[3]{{\lW}^{#1}({#2},{#3})}
\newcommand{\ulab}[4]{{\lU}^{#1}({#2},{#3},{#4})}
\newcommand{\lE}{{\mathtt{E}}}
\newcommand{\lR}{{\mathtt{R}}}
\newcommand{\lW}{{\mathtt{W}}}
\newcommand{\lU}{{\mathtt{RMW}}}
\newcommand{\sE}{\mathsf{E}}
\newcommand{\sR}{\mathsf{R}}
\newcommand{\sW}{\mathsf{W}}
\newcommand{\sU}{\mathsf{RMW}}
\newcommand{\sF}{\mathsf{F}}
\newcommand{\lTID}{{\mathtt{tid}}}
\newcommand{\lLOC}{{\mathtt{loc}}}
\newcommand{\lMOD}{{\mathtt{mod}}}
\newcommand{\lVAL}{{\mathtt{val}}}
\newcommand{\po}{{\color{colorPO}\mathit{po}}}
\newcommand{\rf}{{\color{colorRF}\mathit{rf}}}
\newcommand{\mo}{{\color{colorMO}\mathit{mo}}}
\newcommand{\seq}{\mathbin{;}}
\newcommand{\lPO}{{\color{colorPO}\mathtt{po}}}
\newcommand{\lRF}{{\color{colorRF} \mathtt{rf}}}
\newcommand{\lMO}{{\color{colorMO} \mathtt{mo}}}
\newcommand{\lFR}{{\color{colorFR} \mathtt{fr}}}
\newcommand{\lSW}{{\color{colorSYN}\mathtt{sw}}}
\newcommand{\lHB}{{\color{colorHB}\mathtt{hb}}}
\newcommand{\lHBSC}{{\color{colorHB}\mathtt{hb}}_{\SC}}
\newcommand{\Init}{\mathsf{Init}}
\newcommand{\Tid}{\mathsf{Tid}}
\newcommand{\Loc}{\mathsf{Loc}}
\newcommand{\NALoc}{\mathsf{Loc_{\na}}}
\newcommand{\Val}{\mathsf{Val}}
\newcommand{\Lab}{\mathsf{Lab}}
\newcommand{\Crit}{\mathsf{C}}
\newcommand{\fenceInst}[1]{\kw{fence}({#1})}
\newcommand{\assignInst}[2]{#1\;{:=}\;#2}
\newcommand{\faddInst}[3]{#1 \;{:=}\; \faddInstn({#2},{#3})}
\newcommand{\casInstn}{\kw{CAS}}
\newcommand{\faddInstn}{\kw{FADD}}
\newcommand{\waitInstn}{\kw{wait}}
\newcommand{\bcasInstn}{\kw{BCAS}}
\newcommand{\waitInst}[2]{{\waitInstn({#1} = {#2})}}
\newcommand{\RLX}{\ensuremath{\textnormal{RC20}}\xspace}
\newcommand{\SC}{\ensuremath{\textnormal{SC}}\xspace}
\newcommand{\HB}{\ensuremath{\textnormal{HB}}\xspace}
\newcommand{\prev}{\mathit{prev}}
\newcommand{\vc}{\mathbb{L}}
\newcommand{\vcT}{\mathbb{{}{T}}}
\newcommand{\vcW}{\mathbb{{}{W}}}
\newcommand{\vcM}{\mathbb{{}{M}}}
\newcommand{\bmT}{\mathrm{T}}
\newcommand{\bmW}{\mathrm{W}}
\newcommand{\bmM}{\mathrm{M}}
\newcommand{\vcU}{\lU}
\newcommand{\vcTU}{{\vcT}^{\vcU}}
\newcommand{\vcWU}{{\vcW}^{\vcU}}
\newcommand{\vcMU}{{\vcM}^{\vcU}}
\newcommand{\vT}{{{}\vcT}}
\newcommand{\vWU}{\vcW^{\lU}}
\newcommand{\valv}{{\mathtt{val}}}
\newcommand{\vcTV}{\ensuremath{{\vcT_\valv}}}
\newcommand{\vcWV}{\ensuremath{{\vcW_\valv}}}
\newcommand{\vcMV}{\ensuremath{{\vcM_\valv}}}
\newcommand{\vcTUV}{\ensuremath{\vcTU_\valv}}
\newcommand{\vcWUV}{\ensuremath{\vcWU_\valv}}
\newcommand{\vcMUV}{\ensuremath{\vcMU_\valv}}
\newcommand{\vTtp}[1]{{{}\vT_{#1}}}
\newcommand{\vTt}[1]{{}\vT^{#1}}
\newcommand{\vWt}{{{\vcW}}}
\newcommand{\vTUt}[1]{{{}\vcTU_{#1}}}
\newcommand{\vWUt}{{{}\vcWU}}
\newcommand{\crit}[1]{\mathtt{Crit}_{\loc}}
\newcommand{\wmax}[1]{\mathtt{w}^{\sf max}_{#1}}
\newcounter{mylabelcounter}
\newcommand{\labelAxiom}[2]{%
\hfill{\normalfont\textsc{(#1)}}\refstepcounter{mylabelcounter}
\immediate\write\@auxout{%
  \string\newlabel{#2}{{\unexpanded{\normalfont\textsc{#1}}}{\thepage}{{\unexpanded{\normalfont\textsc{#1}}}}{mylabelcounter.\number\value{mylabelcounter}}{}}
}%
}
\newcommand{\squishlist}[1][$\bullet$]{%
 \begin{list}{#1}
  { \setlength{\itemsep}{0pt}
     \setlength{\parsep}{0pt}
     \setlength{\topsep}{1pt}
     \setlength{\partopsep}{0pt}
     \setlength{\leftmargin}{1.2em}
     \setlength{\labelwidth}{0.5em}
     \setlength{\labelsep}{0.4em} } }
\newcommand{\squishend}{
  \end{list}  }
\newcommand{\changed}[1]{{\color{red!70!black}{#1}}}
\newcommand{\prog}{{P}}
\newcommand{\typ}{\mathit{typ}}
\newcommand{\loc}{{x}}
\newcommand{\loca}{{y}}
\newcommand{\tid}{{\tau}}
\newcommand{\tida}{{\pi}}
\newcommand{\lab}{{l}}
\newcommand{\val}{v}
\newcommand{\vala}{u}
\newcommand{\valset}{V}
\renewcommand{\mod}{o}
\newcommand*\circled[1]{\tikz[baseline=(char.base)]{
            \node[shape=rectangle,rounded corners,draw=red,inner sep=2pt] (char) {\ensuremath{#1}};}}
\newcommand{\toolname}{{\textsc{RSan}}\xspace}
\newcommand{\toolnamefull}{Robustness Sanitizer\xspace}
\newcommand{\mylabel}[2]{#2\def\@currentlabel{#2}\label{#1}}
\newcommand{\curv}{{\mathtt{c}}}
\newcommand{\acqv}{{\mathtt{a}}}
\newcommand{\relv}{{\mathtt{r}}}
\newcommand*{\eqassnh}{\mathrel{\rlap{%
                     \raisebox{0.3ex}{$\m@th\cdot$}}%
                     \raisebox{-0.3ex}{$\m@th\cdot$}}%
                     }
\newcommand*{\eqassn}{\,{\eqassnh}{=}\,}
\newcommand*{\sqassn}{\,{\eqassnh}{\sqcup}{=}\,}
\algnewcommand\algorithmicswitch{\textbf{switch}}
\algnewcommand\algorithmiccase{\textbf{case}}
\algnewcommand\algorithmicassert{\texttt{assert}}
\algnewcommand\Assert[1]{\State \algorithmicassert(#1)}%
\newcommand{\floc}{\mathtt{f}}
\newcommand{\ctid}[1]{{{}\mathtt{T}_#1}}
\newcommand{\cloc}[1]{\mathtt{
\ifthenelse{\equal{#1}{1}}{x}{
\ifthenelse{\equal{#1}{2}}{y}{
\ifthenelse{\equal{#1}{3}}{z}{
\ifthenelse{\equal{#1}{4}}{w}{
\problem}}}}}}
\newcommand{\creg}[1]{\mathtt{
\ifthenelse{\equal{#1}{1}}{a}{
\ifthenelse{\equal{#1}{2}}{b}{
\ifthenelse{\equal{#1}{3}}{c}{
\ifthenelse{\equal{#1}{4}}{d}{
\ifthenelse{\equal{#1}{5}}{e}{
\ifthenelse{\equal{#1}{6}}{f}{
\problem}}}}}}}}
\newcommand{\callInst}[3]{
\ifthenelse{\equal{#1}{}}
{\ifthenelse{\equal{#3}{}}
{\kw{call}({#2})}
{\kw{call}({#2},{#3})}}
{\ifthenelse{\equal{#3}{}}
{#1 \;{:=}\; \kw{call}({#2})}
{#1 \;{:=}\; \kw{call}({#2},{#3})}}
}
\newcommand{\Time}{\mathsf{Time}}
\newcommand{\ts}{t}
\newcommand{\lTS}{{\color{colorTS} \mathtt{ts}}}
\newcommand{\lTSU}{{\lTS^\lU}}
\newcommand{\epoch}[2]{#1{\texttt{@}}#2}
\newcommand{\tsan}{\textsc{TSan}\xspace}
\newcommand{\ctsan}{\textsc{TSan11}\xspace}
\newcommand{\ctester}{\textsc{C11Tester}\xspace}
\newcommand{\rocker}{\textsc{Rocker}\xspace}
\newcommand{\hsz}{0.75em}
\newcommand{\hsep}{\hspace*{\hsz}}
\newcommand{\bmark}[1]{\texttt{#1}}
\newenvironment{adjblockarray}
 {\begin{lrbox}{\adjblockarraybox}$\begin{blockarray}}
 {\end{blockarray}$\end{lrbox}%
  \raisebox{-1.5ex}[\dimexpr\height-2ex][\dimexpr\depth-2ex]{\usebox{\adjblockarraybox}}}
\newsavebox{\adjblockarraybox}
\newcommand{\exl}[2]{%
\begin{adjblockarray}{c}
  {\textcolor{black}{\scriptstyle #1}}\\
  {\textcolor{black}{\scriptstyle #2}}\\
\end{adjblockarray}
}
\newcommand{\exmx}[2]{%
\begin{adjblockarray}{(c)}
  {\scriptstyle #1}\\
  {\scriptstyle #2}\\
\end{adjblockarray}
}
\newsavebox{\initb}
\newsavebox{\wx}
\newsavebox{\wxry}
\newsavebox{\wxrywy}
\newsavebox{\sbex}
\pgfplotsset{%
    discard if/.style 2 args={%
        x filter/.code={%
            \edef\tempa{\thisrow{#1}}
            \edef\tempb{#2}
            \ifx\tempa\tempb
                
            \fi
        }
    },
    discard if not/.style 2 args={%
        x filter/.code={%
            \edef\tempa{\thisrow{#1}}
            \edef\tempb{#2}
            \ifx\tempa\tempb
            \else
                
            \fi
        }
    }
}
\begin{document}

\newcommand{\mytitle}{Dynamic Robustness Verification Against Weak Memory}
\title[\mytitle]{\mytitle}
\iflong\title[\mytitle \ (Extended Version)]{\mytitle\\ (Extended Version)}\fi

\author{Roy Margalit}
\orcid{0000-0001-7266-8681}             %
\affiliation{
  \institution{Tel Aviv University}           %
  \country{Israel}                   %
}
\email{roy.margalit@cs.tau.ac.il}         %

\author{Michalis Kokologiannakis}
\orcid{0000-0002-7905-9739}
\affiliation{
  \institution{ETH Zurich}
  \city{Department of Computer Science, Zurich}
  \country{Switzerland}
}
\email{michalis.kokologiannakis@inf.ethz.ch}

\author{Shachar Itzhaky}
\orcid{0000-0002-7276-7644}
\affiliation{
  \institution{Technion}
  \country{Israel}
}
\email{shachari@cs.technion.ac.il}

\author{Ori Lahav}
\orcid{0000-0003-4305-6998}             %
\affiliation{
  \institution{Tel Aviv University}           %
  \country{Israel}                   %
}
\email{orilahav@tau.ac.il}         %

\begin{abstract}
Dynamic race detection is a highly effective runtime verification technique
for identifying data races by instrumenting and monitoring concurrent program runs.
However, standard dynamic race detection
is incompatible with practical weak memory models;
the added instrumentation introduces extra synchronization,
which masks weakly consistent behaviors and inherently misses certain data races.
In response, we propose
to dynamically verify \emph{program robustness}---a property ensuring that a program exhibits only strongly consistent behaviors.
Building on an existing static decision procedures,
we develop an algorithm for dynamic robustness verification under a C11-style memory model.
The algorithm is based on ``location clocks'', a variant of vector clocks used in standard race detection.
It allows effective and easy-to-apply defense against weak memory on a per-program basis,
which can be combined with race detection that assumes strong consistency.
We implement our algorithm in a tool, called \toolname, and evaluate it across various settings.
To our knowledge, this work is the first to propose and develop dynamic verification of robustness
against weak memory models.
\end{abstract}

\begin{CCSXML}
<ccs2012>
   <concept>
       <concept_id>10011007.10011006.10011041</concept_id>
       <concept_desc>Software and its engineering~Compilers</concept_desc>
       <concept_significance>300</concept_significance>
       </concept>
   <concept>
       <concept_id>10011007.10010940.10010992.10010998.10011001</concept_id>
       <concept_desc>Software and its engineering~Dynamic analysis</concept_desc>
       <concept_significance>500</concept_significance>
       </concept>
   <concept>
       <concept_id>10011007.10011006.10011008.10011009.10011014</concept_id>
       <concept_desc>Software and its engineering~Concurrent programming languages</concept_desc>
       <concept_significance>500</concept_significance>
       </concept>
   <concept>
       <concept_id>10011007.10011074.10011099.10011692</concept_id>
       <concept_desc>Software and its engineering~Formal software verification</concept_desc>
       <concept_significance>300</concept_significance>
       </concept>
   <concept>
       <concept_id>10003752.10003753.10003761</concept_id>
       <concept_desc>Theory of computation~Concurrency</concept_desc>
       <concept_significance>500</concept_significance>
       </concept>
   <concept>
       <concept_id>10003752.10010124.10010138.10010142</concept_id>
       <concept_desc>Theory of computation~Program verification</concept_desc>
       <concept_significance>300</concept_significance>
       </concept>
 </ccs2012>
\end{CCSXML}

\ccsdesc[300]{Software and its engineering~Compilers}
\ccsdesc[500]{Software and its engineering~Dynamic analysis}
\ccsdesc[500]{Software and its engineering~Concurrent programming languages}
\ccsdesc[300]{Software and its engineering~Formal software verification}
\ccsdesc[500]{Theory of computation~Concurrency}
\ccsdesc[300]{Theory of computation~Program verification}

\iflong 
\else
\keywords{
Weak memory models, C/C++11, Robustness, Dynamic race detection}
\fi

\maketitle

\section{Introduction}%
\label{sec:intro}

Developers are well-aware that %
concurrent code is extremely hard to get right.
With static verification methods not scaling to
real-world programs, dynamic techniques have gained
widespread success and adoption,
and tools like ThreadSanitizer (\tsan) have become essential
parts of mainstream compilers~\cite{Serebryany09,www:tsan}.
Such tools detect \emph{data races}---situations
where two threads concurrently access the same variable,
and at least one of them is writing---commonly leading to
bugs in concurrent programs.
To do so, they instrument the program to track synchronization 
at runtime and use the instrumentation for detecting races in the current run
and for predicting races that will occur in other runs~\cite{Mathur22}.

Seminal work on dynamic race detection assumes that inter-thread
synchronization is achieved using locks or strong synchronization variables (\eg
Java's volatile accesses)~\cite{Bond10,Flanagan09,Savage97}.
More specifically, race detectors allow concurrent accesses to locks and synchronization variables,
but consider races on all other variables as bugs, and report them to the user.
Crucially, they assume that the synchronization variables
follow \emph{sequential consistency} (\SC), \ie accesses to
these variables behave as if they are interleaved, and every read obtains
its value from the latest write to the same variable.

In practice, however, synchronization variables provide guarantees weaker than \SC.
In particular, since 2011, C/C++ has introduced performance-oriented specialized
synchronization accesses, known as \emph{atomic accesses},
with several different strengths that allow behaviors weaker than \SC,
as well as \emph{fences} for fine-grained control
on the synchronization between these accesses.
The semantics of atomic accesses and fences is defined by the C11 \emph{weak (or relaxed)
memory model}~\cite{cppstandard,cstandard,Batty-al:POPL11,scfix}.

In the context of a weak memory model like C11, dynamic race detection is especially valuable.
Indeed, understanding the model and correctly synchronizing C11
code has been proven a difficult task, and 
developers are often advised to forgo the potential performance
benefits of C11, and instead rely on the
``good old'' SC accesses and locks~\cite{Boehm:2008}.
In turn, dynamic race detection would automatically identify subtle issues
that lead to data races, and guide programmers in strengthening
synchronization variables or adding fences to ensure sufficient synchronization.

Unfortunately, developing dynamic race detectors for languages 
with weak memory models is highly challenging 
due to what we term the  ``observer effect'': 
the instrumentation code that monitors for data races requires its own
synchronization mechanisms to avoid internal data races,
which inadvertently masks non-SC behaviors of the original program.
We demonstrate the consequences of the observer effect with the
following example.

\begin{example}
For the program on the right, the behavior where both reads read $0$ is forbidden
under
\begin{wrapfigure}[4]{r}{.15\textwidth}
$\inarrII{\assignInst{\cloc1}{1} \\ \assignInst{\creg1}{\cloc2}}
{\assignInst{\cloc2}{1} \\ \assignInst{\creg2}{\cloc1}}$
\end{wrapfigure}
\SC, but allowed by C11---and observable on standard %
laptops---if we use weakly consistent atomic accesses.
However, as soon as we use a dynamic race detector like \tsan on this
program, the annotated outcome no longer manifests. 
While it is known that \tsan does not fully handle the C11 weak memory model
(including soundness issues\footnote{See, \eg \url{https://gcc.gnu.org/bugzilla/show_bug.cgi?id=97868} and \url{https://github.com/google/sanitizers/issues/1415} (accessed November 2024)}),
it is perhaps less recognized that running \tsan on a program 
can mask certain behaviors that the program might otherwise exhibit.
A real-world case where \tsan
misses a race due to this problem, is the well-known (broken) implementation of Dekker's mutual
exclusion protocol~\cite{EWD123} using C11 release/acquire atomics.
When run with \tsan, the race within the intended critical section is never revealed.
\label{ex:obs}
\end{example}

\vspace*{-4pt}
Weak behaviors like the one above remain hidden during
monitoring, but they will resurface in production code, as soon as the
instrumentation is removed.
In fact, even logging the executed accesses for a ``post-mortem''
analysis introduces extra synchronization: accessing the shared log
requires a memory barrier, which again restricts the behaviors a
program would otherwise exhibit.

To address this issue, one must essentially execute the program while
simulating the C11 constraints.
For instance, a read can no longer simply retrieve the most recent
written value, but rather the instrumentation should (randomly)
select a value of some write permitted to read-from by the memory model.
Such an approach leads to
a clear trade-off between completeness---\ie
potentially detecting \emph{all} racy behaviors---and
performance/memory overhead compared to standard execution.
Existing methods either significantly sacrifice
completeness~\cite{Lidbury17} or incur high memory overhead
proportional to the execution length~\cite{Luo21}.

In this paper, we propose a novel approach to the problem of dynamic
verification of concurrent programs under a weak memory model.
Our key idea is to accompany dynamic race detection that assumes SC with a
dynamic verification of \emph{robustness}.
Robustness is a general correctness criterion for concurrent
programs under weak memory models that requires that all behaviors allowed
under the weak model are also allowed under SC\@.
It guarantees that verification in general, and dynamic race detection in particular,
may ignore weak memory behaviors and work under the illusion of SC\@.
Robustness does not rule out the use of weakly consistent atomics.
Indeed, in a variety of cases, one can use weak accesses without making weak behaviors
externally observable, thereby obtaining the best of both worlds: SC semantics
and efficient implementations.

While a large body of previous work (see \cref{sec:related})
has developed verification methods and tools for
robustness,
these methods are static, and do not scale to real-world full-blown programs.
The static methods, however, reveal an important benefit of
robustness, compared to other correctness notions: robustness
is reducible to the absence of a certain shape \emph{in executions under SC}.
This property makes robustness an ideal candidate for dynamic
verification, as robustness (and non-robustness) can be checked while ignoring weak memory
behaviors, and is thus immune to the ``observer effect''.

Based on the above observation, we develop the first dynamic
verification method for program robustness against a C11-style memory model
(we target \RLX, a variant of C11 from~\cite{popl21}, see \cref{sec:rc20}).
This means that we run the program,
and check that no atomic access in the current run may exhibit a weak behavior according to the memory model,
\ie a load reading a stale value or a write becoming stale although it is executed last.
Then, we accompany our technique with a dynamic race detector that
assumes \SC for detecting data races on \emph{non-atomic} accesses.
Dealing with a language-level model, we note that non-robustness does not imply the occurrence of weak behaviors, 
but rather signifies that such behaviors are permitted by the formal model
and can potentially manifest when using specific compilers on particular architectures.

To dynamically verify robustness, we enhance
the instrumentation used by race detectors to 
detect \emph{non-robustness witnesses}.
The latter, defined in \citet{pldi19,popl21}, form our analog of data races.
However, the instrumentation used in 
these works for \emph{static} detection of non-robustness witnesses has significant drawbacks when used \emph{dynamically}
(see \cref{sec:BM_bad}).
To solve this, we develop an instrumentation based on 
\emph{location clocks}, an adaptation of \emph{vector clocks}~\cite{fidge1988timestamps,VirtTimeGlobStates}
that assigns timestamps to memory locations, rather than threads.
The instrumentation based on location clocks is better suited for dynamic analysis.
Specifically, it reduces contention on locks for maintaining the instrumentation,
and thus retains more of the concurrency in the input program.
Moreover, it enables a \emph{predictive} analysis that can identify non-robustness witnesses
by observing executions that do not directly serve as such witnesses.
The proposed technique is sound (\ie never reports spurious robustness violations)
and complete (\ie if a program is not robust, there exists at least one schedule under
which the analysis reports a violation), 
and its memory consumption is practically independent of the length of the run.

We implement our approach on top of \tsan in a tool called \toolname
(\toolnamefull), and evaluate it on multiple examples and case studies.
\toolname scales to real-world codebases with thousands of lines of code.
Upon detecting a robustness violation,
\toolname reports the offending instructions to the user,
who can then prevent the corresponding weak behavior
by strengthening some accesses and/or inserting fences.
With its simplicity of use and applicability to C/C++ as-is,
we envision \toolname being used in a defensive programming
style, enabling developers to improve performance of their code
without being experts in weak memory models.

Finally, we also develop a race detection algorithm for C/C++11 
(specifically, its \RLX variant), 
and implement it in \tsan's infrastructure. 
Although robustness (for atomics) permits the race detector to assume \SC semantics,
the detection algorithm must still account for the model's definition of data races on non-atomics, 
which are determined by the way atomics and fences are used.

The rest of the paper is organized as follows:

\begin{itemize}
\item[\cref{sec:pre}] We recap (a fragment of) C11
  and its robustness verification technique from prior work.
\item[\cref{sec:algorithm}] We introduce our dynamic-analysis
  algorithm based on location clocks.
\item[\cref{sec:extensions}] We present extensions to the basic algorithm: covering the full \RLX model
and a technique to avoid reporting common benign robustness violations.
\item[\cref{sec:impl},\cref{sec:eval}] We discuss the implementation of our approach and evaluate our implementation.
\item[\cref{sec:related}] We discuss related work and outline potential future directions.
\end{itemize}
\iflong
Appendices \ref{app:rlx} and \ref{app:race} provide the complete location-clock algorithm
and the modified race detection algorithm.
\else
Appendices with the complete location-clock algorithm
and the modified race detection algorithm are included in \cite{appendix}.
\fi
The \toolname tool and the examples used for its evaluation are available in the accompanying artifact
available at \url{https://doi.org/10.5281/zenodo.15002567}.

\newcommand{\figrob}{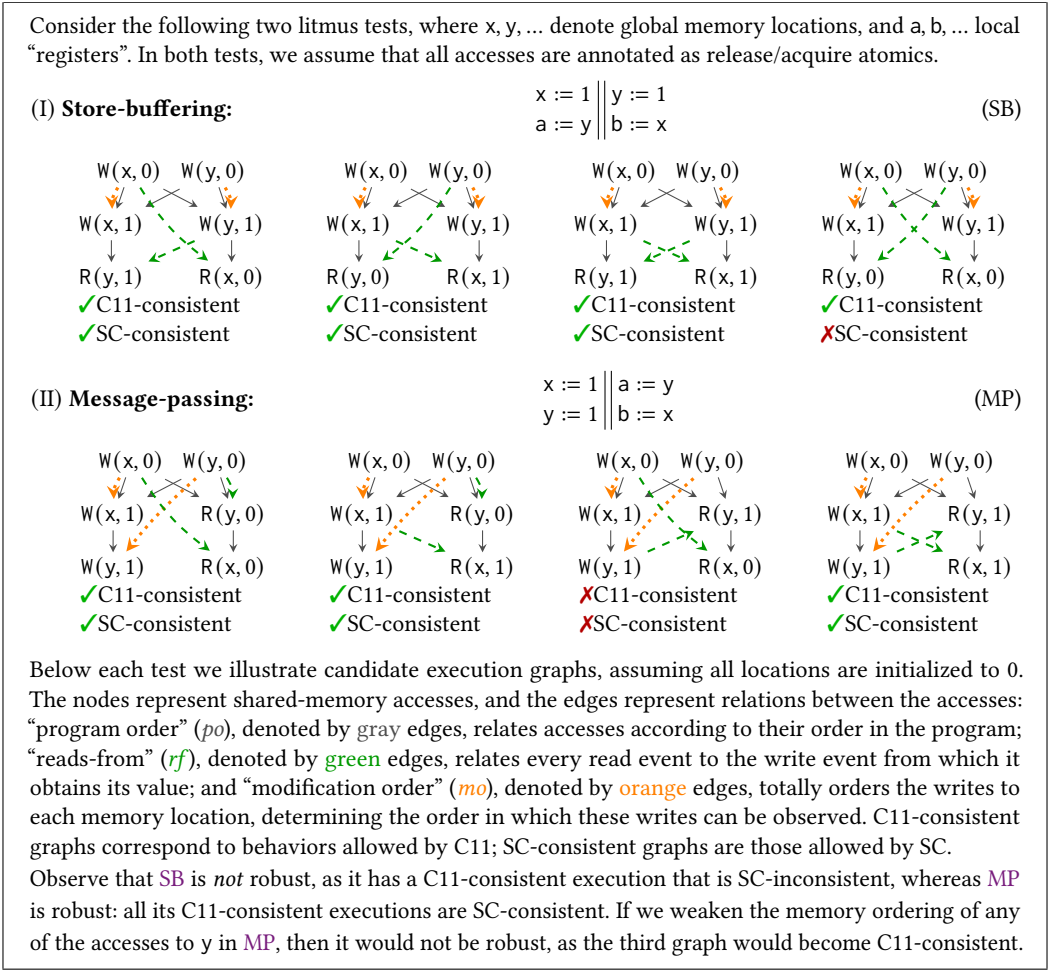
\begin{figure*}
\begin{mdframed}
{\small
Consider the following two litmus tests, where $\cloc1,\cloc2,\ldots$ denote global memory locations,
and $\creg1, \creg2,\ldots$ local ``registers''. %
In both tests, we assume that all accesses are annotated as release/acquire atomics.

\noindent
\begin{flalign}
        &(\mathrm{I}) \textbf{ Store-buffering:} &
\inarrII{\assignInst{\cloc1}{1} \\ \assignInst{\creg1}{\cloc2}}
{\assignInst{\cloc2}{1} \\ \assignInst{\creg2}{\cloc1}} &&
        \tag{SB}\label{prog:sb}
    \end{flalign}

\small
\noindent
\hfill
\begin{tikzpicture}[yscale=0.7,xscale=0.8]
  \node (0x)  at (-0.7,2) {$\evlab{\lW}{}{\cloc1}{0}$};
  \node (0y)  at (0.7,2) {$\evlab{\lW}{}{\cloc2}{0}$};
  \node (11)  at (-1,1) {$\evlab{\lW}{}{\cloc1}{1}$ };
  \node (12)  at (-1,0) {$\evlab{\lR}{}{\cloc2}{1}$ };
  \node (21)  at (1,1) {$\evlab{\lW}{}{\cloc2}{1}$ };
  \node (22)  at (1,0) {$\evlab{\lR}{}{\cloc1}{0}$ };
  \draw[po] (11) edge (12);
  \draw[po] (21) edge (22);
  \draw[po] (0x) edge (11) edge (21);
  \draw[po] (0y) edge (11) edge (21);
    \draw[mo,bend right=15] (0x) edge (11);
      \draw[rf] (21) edge (12);
  \draw[mo,bend left=15] (0y) edge (21);
      \draw[rf,bend right=10] (0x) edge (22);
        \node  at (-0.2,-0.8) {$\inarr{\text{\cmark C11-consistent}\\ \text{\cmark SC-consistent}}$};
\end{tikzpicture}
\hfill
\begin{tikzpicture}[yscale=0.7,xscale=0.8]
  \node (0x)  at (-0.7,2) {$\evlab{\lW}{}{\cloc1}{0}$};
  \node (0y)  at (0.7,2) {$\evlab{\lW}{}{\cloc2}{0}$};
  \node (11)  at (-1,1) {$\evlab{\lW}{}{\cloc1}{1}$ };
  \node (12)  at (-1,0) {$\evlab{\lR}{}{\cloc2}{0}$ };
  \node (21)  at (1,1) {$\evlab{\lW}{}{\cloc2}{1}$ };
  \node (22)  at (1,0) {$\evlab{\lR}{}{\cloc1}{1}$ };
  \draw[po] (11) edge (12);
  \draw[po] (21) edge (22);
  \draw[po] (0x) edge (11) edge (21);
  \draw[po] (0y) edge (11) edge (21);
    \draw[mo,bend right=15] (0x) edge (11);
      \draw[rf,bend left=10] (0y) edge (12);
  \draw[mo,bend left=15] (0y) edge (21);
      \draw[rf] (11) edge (22);
        \node  at (-0.2,-0.8) {$\inarr{\text{\cmark C11-consistent}\\ \text{\cmark SC-consistent}}$};
\end{tikzpicture}
\hfill
\begin{tikzpicture}[yscale=0.7,xscale=0.8]
  \node (0x)  at (-0.7,2) {$\evlab{\lW}{}{\cloc1}{0}$};
  \node (0y)  at (0.7,2) {$\evlab{\lW}{}{\cloc2}{0}$};
  \node (11)  at (-1,1) {$\evlab{\lW}{}{\cloc1}{1}$ };
  \node (12)  at (-1,0) {$\evlab{\lR}{}{\cloc2}{1}$ };
  \node (21)  at (1,1) {$\evlab{\lW}{}{\cloc2}{1}$ };
  \node (22)  at (1,0) {$\evlab{\lR}{}{\cloc1}{1}$ };
  \draw[po] (11) edge (12);
  \draw[po] (21) edge (22);
  \draw[po] (0x) edge (11) edge (21);
  \draw[po] (0y) edge (11) edge (21);
    \draw[mo,bend right=15] (0x) edge (11);
      \draw[rf] (21) edge (12);
  \draw[mo,bend left=15] (0y) edge (21);
      \draw[rf] (11) edge (22);
        \node  at (-0.2,-0.8) {$\inarr{\text{\cmark C11-consistent}\\ \text{\cmark SC-consistent}}$};
\end{tikzpicture}
\hfill
\begin{tikzpicture}[yscale=0.7,xscale=0.8]
  \node (0x)  at (-0.7,2) {$\evlab{\lW}{}{\cloc1}{0}$};
  \node (0y)  at (0.7,2) {$\evlab{\lW}{}{\cloc2}{0}$};
  \node (11)  at (-1,1) {$\evlab{\lW}{}{\cloc1}{1}$ };
  \node (12)  at (-1,0) {$\evlab{\lR}{}{\cloc2}{0}$ };
  \node (21)  at (1,1) {$\evlab{\lW}{}{\cloc2}{1}$ };
  \node (22)  at (1,0) {$\evlab{\lR}{}{\cloc1}{0}$ };
  \draw[po] (11) edge (12);
  \draw[po] (21) edge (22);
  \draw[po] (0x) edge (11) edge (21);
  \draw[po] (0y) edge (11) edge (21);
    \draw[mo,bend right=15] (0x) edge (11);
      \draw[rf,bend left=10] (0y) edge (12);
  \draw[mo,bend left=15] (0y) edge (21);
      \draw[rf,bend right=10] (0x) edge (22);
        \node  at (-0.2,-0.8) {$\inarr{\text{\cmark C11-consistent}\\ \text{\xmark SC-consistent}}$};
\end{tikzpicture}
\hfill

\noindent
\begin{flalign}
        &(\mathrm{II}) \textbf{ Message-passing:} &
\inarrII{\assignInst{\cloc1}{1} \\ \assignInst{\cloc2}{1}}
{\assignInst{\creg1}{\cloc2} \\ \assignInst{\creg2}{\cloc1}} &&
        \tag{MP}\label{prog:mp}
    \end{flalign}

\small
\noindent
\hfill
\begin{tikzpicture}[yscale=0.7,xscale=0.8]
  \node (0x)  at (-0.7,2) {$\evlab{\lW}{}{\cloc1}{0}$};
  \node (0y)  at (0.7,2) {$\evlab{\lW}{}{\cloc2}{0}$};
  \node (11)  at (-1,1) {$\evlab{\lW}{}{\cloc1}{1}$ };
  \node (12)  at (-1,0) {$\evlab{\lW}{}{\cloc2}{1}$ };
  \node (21)  at (1,1) {$\evlab{\lR}{}{\cloc2}{0}$ };
  \node (22)  at (1,0) {$\evlab{\lR}{}{\cloc1}{0}$ };
  \draw[po] (11) edge (12);
  \draw[po] (21) edge (22);
  \draw[po] (0x) edge (11) edge (21);
  \draw[po] (0y) edge (11) edge (21);
    \draw[mo,bend right=15] (0x) edge (11);
      \draw[rf,bend left=15] (0y) edge (21);
  \draw[mo,bend right=5] (0y) edge (12);
      \draw[rf,bend right=10] (0x) edge (22);
              \node  at (-0.2,-0.8) {$\inarr{\text{\cmark C11-consistent}\\ \text{\cmark SC-consistent}}$};
\end{tikzpicture}
\hfill
\begin{tikzpicture}[yscale=0.7,xscale=0.8]
  \node (0x)  at (-0.7,2) {$\evlab{\lW}{}{\cloc1}{0}$};
  \node (0y)  at (0.7,2) {$\evlab{\lW}{}{\cloc2}{0}$};
  \node (11)  at (-1,1) {$\evlab{\lW}{}{\cloc1}{1}$ };
  \node (12)  at (-1,0) {$\evlab{\lW}{}{\cloc2}{1}$ };
  \node (21)  at (1,1) {$\evlab{\lR}{}{\cloc2}{0}$ };
  \node (22)  at (1,0) {$\evlab{\lR}{}{\cloc1}{1}$ };
  \draw[po] (11) edge (12);
  \draw[po] (21) edge (22);
  \draw[po] (0x) edge (11) edge (21);
  \draw[po] (0y) edge (11) edge (21);
    \draw[mo,bend right=15] (0x) edge (11);
      \draw[rf,bend left=15] (0y) edge (21);
  \draw[mo,bend right=5] (0y) edge (12);
      \draw[rf] (11) edge (22);
              \node  at (-0.2,-0.8) {$\inarr{\text{\cmark C11-consistent}\\ \text{\cmark SC-consistent}}$};
\end{tikzpicture}
\hfill
\begin{tikzpicture}[yscale=0.7,xscale=0.8]
  \node (0x)  at (-0.7,2) {$\evlab{\lW}{}{\cloc1}{0}$};
  \node (0y)  at (0.7,2) {$\evlab{\lW}{}{\cloc2}{0}$};
  \node (11)  at (-1,1) {$\evlab{\lW}{}{\cloc1}{1}$ };
  \node (12)  at (-1,0) {$\evlab{\lW}{}{\cloc2}{1}$ };
  \node (21)  at (1,1) {$\evlab{\lR}{}{\cloc2}{1}$ };
  \node (22)  at (1,0) {$\evlab{\lR}{}{\cloc1}{0}$ };
  \draw[po] (11) edge (12);
  \draw[po] (21) edge (22);
  \draw[po] (0x) edge (11) edge (21);
  \draw[po] (0y) edge (11) edge (21);
    \draw[mo,bend right=15] (0x) edge (11);
      \draw[rf] (12) edge (21);
  \draw[mo,bend right=5] (0y) edge (12);
      \draw[rf,bend right=10] (0x) edge (22);
              \node  at (-0.2,-0.8) {$\inarr{\text{\xmark C11-consistent}\\ \text{\xmark SC-consistent}}$};
\end{tikzpicture}
\hfill
\begin{tikzpicture}[yscale=0.7,xscale=0.8]
  \node (0x)  at (-0.7,2) {$\evlab{\lW}{}{\cloc1}{0}$};
  \node (0y)  at (0.7,2) {$\evlab{\lW}{}{\cloc2}{0}$};
  \node (11)  at (-1,1) {$\evlab{\lW}{}{\cloc1}{1}$ };
  \node (12)  at (-1,0) {$\evlab{\lW}{}{\cloc2}{1}$ };
  \node (21)  at (1,1) {$\evlab{\lR}{}{\cloc2}{1}$ };
  \node (22)  at (1,0) {$\evlab{\lR}{}{\cloc1}{1}$ };
  \draw[po] (11) edge (12);
  \draw[po] (21) edge (22);
  \draw[po] (0x) edge (11) edge (21);
  \draw[po] (0y) edge (11) edge (21);
    \draw[mo,bend right=15] (0x) edge (11);
      \draw[rf] (12) edge (21);
  \draw[mo,bend right=5] (0y) edge (12);
      \draw[rf] (11) edge (22);
                    \node  at (-0.2,-0.8) {$\inarr{\text{\cmark C11-consistent}\\ \text{\cmark SC-consistent}}$};
\end{tikzpicture}

Below each test we illustrate candidate execution graphs,
assuming all locations are initialized to $0$.
The nodes represent shared-memory accesses, and
the edges represent relations between the accesses:
``program order'' ($\po$), denoted by {\color{colorPO}gray} edges,
relates accesses according to their order in the program;
``reads-from'' ($\rf$), denoted by {\color{colorRF}green} edges,
relates every read event to the write event from which it obtains its value; and
``modification order'' ($\mo$), denoted by {\color{colorMO}orange} edges,
totally orders the writes to each memory location, determining the order in which
these writes can be observed.
C11-consistent graphs correspond to behaviors allowed by C11;
\SC-consistent graphs are those allowed by SC\@.

Observe that~\ref{prog:sb} is \emph{not} robust, as it has a C11-consistent execution that is \SC-inconsistent,
whereas~\ref{prog:mp} is robust: all its C11-consistent executions are \SC-consistent.
If we weaken the memory ordering of any of the accesses to $\cloc2$ in~\ref{prog:mp},
then it would not be robust, as the third graph would become C11-consistent.}
\end{mdframed}
\caption{Examples of consistency and robustness.}%
\label{fig:intro_rob}
\end{figure*}}

\section{Preliminaries}%
\label{sec:pre}

We begin by providing the necessary background on the C11 model
(\cref{sec:c11}), the robustness criterion (\cref{sec:robustness}),
and the static verification approach previously proposed for
robustness (\cref{sec:robustness_verification}).
\Cref{fig:intro_rob} presents simple illustrative examples,
which we reuse throughout this section.

Atomic accesses in C11 have memory orderings (``relaxed'',
``release'', ``acquire'', and ``sequentially consistent''), with the
weak orderings being better-performant on multicore hardware compared
to sequentially consistent ones.
To simplify the presentation, in this section we only discuss \emph{release/acquire write and read accesses},
a well-studied fragment of C11~\cite{sra}.
Our full development and the accompanying tool support a
broader subset of C11, which we discuss in \cref{sec:extensions}.

\figrob

\subsection{The Release/Acquire Memory Model}%
\label{sec:c11}

The C11 model is declarative,
capturing program behavior in \emph{execution graphs}---structures
that track several partial orders on memory accesses,
using them to limit the allowed return values of reads.
Nodes in these graphs are called \emph{events}, and correspond to
memory accesses. Edges denote various relations between the accesses.
Let us now formalize these notions.

\paragraph{Domains}
We let $ \Loc=\set{\cloc{1},\cloc{2},\ldots}$ be a finite set of locations,
$\Val$ be a set of values that contains $0$, which we use as the initial value,
and $\Tid=\set{\ctid{1},\ctid{2},\ldots}$ be a finite set of thread identifiers.
In our code snippets, threads are numbered from left to right, 
with the leftmost thread labeled $\ctid1$.

\paragraph{Events}
An \emph{event} $e$ %
is a tuple $\tup{\tid,s,\lab}$,
where $\tid \in \Tid \uplus \set{\bot}$ is a thread identifier (or $\bot$ for initialization events),
$s \in \N$ is a serial number inside each thread,
and $\lab \in \Lab$ is a label,
which can be either
	$\rlab{}{\loc}{\val}$ (read)
	or $\wlab{}{\loc}{\val}$ (write)
	with $\loc\in \Loc$ and $\val\in \Val$.
The functions $\lTID$, $\lLOC$, and $\lVAL$
return the thread identifier,
location, and value of a given event.
We let $\sR$ and $\sW$ denote the sets of all read and write events, respectively.
We  %
employ sub- and superscripts to restrict sets of events,
\eg $\sW_\loc=\set{w\in \sW \st \lLOC(w)=\loc}$
and $E^\tid= \set{e\in E \st \lTID(e)=\tid}$
for a set $E$ of events.

\paragraph{Execution Graphs}
An \emph{execution} graph $G$
is a tuple $\tup{E,\rf,\mo}$, where:
\begin{itemize}
\item $E$ is a finite set of events such that $\Init\suq E$,
  where $\Init \defeq \set{\tup{\bot,0,\wlab{}{\loc}{0}} \st \loc \in \Loc}$ contains an initial write per location.
We require that $\lTID(e)\neq \bot$ for every $e\in E\setminus \Init$,
and that no two different events in $E$ have the same thread identifier and serial number.
\item $\rf$ is a ``reads-from'' mapping that determines the write event from which each read reads its value.
Formally, the following should hold:
\begin{itemize}
\item If $\tup{w,r}\in \rf$, then $w\in E \cap \sW$, $r\in E \cap \sR$, $\lLOC(w)=\lLOC(r)$, and $\lVAL(w)=\lVAL(r)$.
\item For every $r \in E \cap \sR$, there exists exactly one write event $w$ such that $\tup{w,r}\in \rf$.
\end{itemize}
\item $\mo$ is a ``modification order'' that totally orders the writes to each location.
Formally, $\mo$ is a disjoint union of relations $\set{\mo_\loc}_{\loc\in\Loc}$,
such that each $\mo_\loc$ is a strict total order on $E \cap \sW_\loc$.
\end{itemize}
We denote the components of $G$ by $G.\lE$, $G.\lRF$, and $G.\lMO$.
We use $G.\lPO$ (``program order'') to denote the (partial) order on $G.\lE$ in which
initialization events (with $\lTID(e)=\bot$) precede all other events,
and events of the same thread are totally ordered by their serial numbers.
For a set $E'$ of events, we write $G.E'$ for $G.\lE \cap E'$ (\eg $G.\sW=G.\lE\cap \sW$).

\paragraph{Derived Relations}
We use two derived relations in execution graphs:\footnote{
Given a relation $R$, $R^?$ and $R^+$ denote its reflexive and transitive closures.
The inverse of $R$ is denoted by $R^{-1}$,
and the (left) composition of two relations $R_1,R_2$ is denoted by $R_1\seq R_2$.}

\vspace*{-2.5\multicolsep}
\begin{multicols}{2}
  \begin{equation}
G.\lHB \defeq (G.\lPO \cup G.\lRF)^+ \tag{\emph{happens-before}} 
  \end{equation}\break
  \begin{equation}
 G.\lFR \defeq G.\lRF^{-1}\seq G.\lMO  \tag{\emph{from-read}}
  \end{equation}
\end{multicols}

\vspace*{-0.75\multicolsep}
\noindent
Happens-before represents causality among events.
The ``from-read'' relation holds between a read event $r$ and a write event $w$
when $r$ reads from a write $w'$ that is before $w$ in the modification order.

\paragraph{From Programs to Execution Graphs}
A concurrent program $\prog$ induces a labeled transition system (LTS),
where each transition is labeled by a pair $\tup{\tid,\lab} \in \Tid \times \Lab$.
This LTS is independent of memory consistency and includes a non-deterministic choice among all
possible values in $\Val$ for every read instruction in $\prog$.
Its runs induce \emph{candidate execution graphs} of $\prog$,
where the events of each thread are determined according to the labels along the run,
and the reads-from relations and modification orders are arbitrary.
We note that each graph corresponds to a \emph{prefix} of a run of $\prog$.
The full definitions, which are standard, are omitted (see, \eg~\cite{pldi19}).

\paragraph{Consistency}
Among all candidate execution graphs of a program,
only \emph{consistent} ones correspond to allowed program behaviors.
For the fragment discussed here, an execution graph $G$ is \emph{consistent} if the following hold:
\begin{itemize}
\item $G.\lMO\seq G.\lHB$ is irreflexive.\ \labelAxiom{write coherence}{ax:wcoh}
\item $G.\lFR\seq G.\lHB$ is irreflexive.\ \labelAxiom{read coherence}{ax:rcoh}
\item $G.\lPO \cup G.\lRF$ is acyclic.\ \labelAxiom{acyc-po-rf}{ax:hb}
\end{itemize}

For instance, the third execution graph of~\ref{prog:mp} in \cref{fig:intro_rob} is inconsistent,
as it violates~\ref{ax:rcoh}.
Specifically, there is a $G.\lHB$ relation from the event labeled $\lW(x,1)$
to the event labeled $\lR(x,0)$, while $G.\lFR$ between these events points in the opposite direction.

\subsection{Robustness}%
\label{sec:robustness}

Following previous work, we define robustness using execution graphs.
A program $\prog$ is considered \emph{robust}
against a (declarative) memory model $M$ if every $M$-consistent execution graph
generated by $\prog$ is also \SC-consistent.
\SC-consistency of an execution graph $G$ means that
$G.\lPO \cup G.\lRF \cup G.\lMO$ can be linearized
into 
a sequence of events such that every $G.\lRF$-edge
connects each read $r$ with the last write $w$ to the same location that occurs before $r$.
An equivalent definition from~\cite{herding-cats} is provided next.

\begin{definition}%
\label{def:sc_cons}
An execution graph $G$ is \emph{\SC-consistent} if $G.\lHBSC$ is irreflexive,
where:
\begin{align*}
  G.\lHBSC &\defeq (G.\lPO \cup G.\lRF \cup G.\lMO \cup G.\lFR)^+ \tag{\emph{\SC-happens-before}}
\end{align*}
\end{definition}
For instance, the fourth execution graph of~\ref{prog:sb} in \cref{fig:intro_rob} is \SC-inconsistent.
It has a cycle following $\lFR$, $\lPO$, $\lFR$, and $\lPO$
between the events labelled $\lR(\cloc1,0)$, $\lW(\cloc1,1)$, $\lR(\cloc2,0)$, $\lW(\cloc2,1)$, and back to $\lR(\cloc1,0)$.

The \SC-happens-before relation, $\lHBSC$, holds between events $e_1$ and $e_2$
iff the operation associated with $e_1$ has to be executed before the one associated with $e_2$
in every operational \SC run that generates $G$,
\ie in every operational \SC run in which threads execute the operations
associated with the events in $G$ following their order in $G.\lPO$;
every read $r$ reads its value from the write $w$ for which $\tup{w,r}\in G.\lRF$;
and writes to each location are executed following their order in $G.\lMO$.

\subsection{Robustness Verification}%
\label{sec:robustness_verification}

\citet{pldi19} studied the problem of static verification of robustness for a given program.
The challenge lies in the fact that, although one typically assumes a finite-state programs (as we do here),
such programs may still have loops, and thus have infinitely many consistent execution graphs,
making it impossible to enumerate them all.
Still, \citet{pldi19} provided a decision procedure for this problem
by reducing it to reachability in a \emph{finite}-state machine.
Next, we summarize the key insights from their work.

\paragraph{Non-Robustness Witnesses}
The first key idea is that robustness can be verified by considering only runs under \SC{}\@.
In essence, when searching for executions that deviate from \SC,
it suffices to consider ``borderline executions'': execution graphs of the program that are still \SC-consistent,
but the next step of the program that will add one more event to the graph may
result in an \SC-inconsistent graph while maintaining the consistency conditions of the weak memory model.
Such executions are called \emph{non-robustness witnesses} of the given program.
For the formal definition,
we let $G.\wmax{\loc} \defeq \max_{G.\lMO} G.\sW_\loc$
(the most recent write to location $\loc$ in an execution graph $G$).
Then, a non-robustness witness is a configuration obtained during a run under \SC in which 
some thread $\tid$ is about to access
(either read from or write to) a location $\loc$
and the following hold for the (\SC-consistent) execution graph $G$ that was generated until this point:
\begin{enumerate}
\item The next access of $\tid$ races with $G.\wmax{\loc}$,
\ie thread $\tid$ is not $G.\lHB$-``aware'' of $G.\wmax{\loc}$.
Formally, this means that there is no $G.\lHB$ from $G.\wmax{\loc}$
to some $e \in G.\lE$ with $\lTID(e)=\tid$.
\item The next access of $\tid$ cannot have been executed before $G.\wmax{\loc}$,
\ie it must be executed after $G.\wmax{\loc}$ in any \SC run that would generate $G$.
Formally, this means that we have $G.\lHBSC$ from $G.\wmax{\loc}$
to some $e \in G.\lE$ with $\lTID(e)=\tid$.
\end{enumerate}

The first condition ensures that under the C11 memory model
thread $\tid$ can: 
\begin{enumerate*}[label=(\roman*)]
\item read a stale value from $\loc$ if it tries to read $\loc$; and
\item write a value that is not overwriting the last write to $\loc$ if it writes to $\loc$.
\end{enumerate*}
The second condition ensures that no execution under \SC can generate an execution graph
where such access is performed.
\citet{pldi19} show that a program $\prog$ is not robust
\emph{iff} there is some run of $\prog$ that reaches a non-robustness witness.

\begin{example}%
\label{ex:SB_witness}
Consider again the~\ref{prog:sb} program from \cref{fig:intro_rob}.
Under \SC,  if we execute $\ctid1$ first, and then the write instruction of $\ctid2$,
we reach the following ``borderline'' execution graph:

\noindent
\begin{minipage}{0.3\textwidth}
\[G=\vcenter{\hbox{\begin{tikzpicture}[yscale=0.8,xscale=0.8,font=\footnotesize]
  \node (0x)  at (-0.7,2) {$\evlab{\lW}{}{\cloc1}{0}$};
  \node (0y)  at (0.7,2) {$\evlab{\lW}{}{\cloc2}{0}$};
  \node (11)  at (-1,1) {$\evlab{\lW}{}{\cloc1}{1}$ };
  \node (12)  at (-1,0) {$\evlab{\lR}{}{\cloc2}{0}$ };
  \node (21)  at (1,1) {$\evlab{\lW}{}{\cloc2}{1}$ };
  \draw[po] (11) edge (12);
  \draw[po] (0x) edge (11) edge (21);
  \draw[po] (0y) edge (11) edge (21);
    \draw[mo,bend right=15] (0x) edge (11);
      \draw[rf,bend left=10] (0y) edge (12);
  \draw[mo,bend left=15] (0y) edge (21);
\end{tikzpicture}}}\]
\end{minipage}\hfill
\begin{minipage}{0.68\textwidth}
With this graph we obtain a non-robustness witness for~\ref{prog:sb}.
Indeed,
\begin{enumerate*}[label=(\roman*)]
\item $G.\wmax{\cloc1} =\max_{G.\lMO} G.\sW_\cloc1$ is the write event labeled $\lW(\cloc1,1)$;
\item the next access of $\ctid2$ accesses location $\cloc1$;
\item there is no $G.\lHB$ from $G.\wmax{\cloc1}$ to $\ctid2$;
but \item there is $G.\lHBSC$ from $G.\wmax{\cloc1}$ to $\ctid2$
(via an $G.\lFR$-edge from the event labeled $\lR(\cloc2,0)$ to the one labeled $\lW(\cloc2,1)$).
\end{enumerate*}
\end{minipage}

\smallskip
\end{example}

\paragraph{Bounded Instrumentation}
The second key observation is that
for detecting non-robustness witnesses,
it is unnecessary to retain the entire \SC-consistent execution graph generated thus far.
Instead, maintaining a constant-size summary of the graph is sufficient.
Concretely, at each step, we need to know for every thread $\tid$ and location $\loc$:
\begin{enumerate*}[label=(\roman*)]
\item whether there is $G.\lHB$ from $G.\wmax{\loc}$ to thread $\tid$;
and \item whether there is $G.\lHBSC$ from $G.\wmax{\loc}$ to thread  $\tid$.
\end{enumerate*}
This information can be stored in two ``Boolean matrices'' (BMs, for short)
$\bmT_\HB,\bmT_\SC : \Tid \to 2^\Loc$, such that:
\begin{align*}
\loc\in \bmT_\HB(\tid) &\Leftrightarrow  \exists e \in G.\lE^\tid \cup \Init \ldotp \tup{G.\wmax{\loc},e} \in G.\lHB^? \\
\loc\in \bmT_\SC(\tid) & \Leftrightarrow \exists e \in G.\lE^\tid\cup \Init \ldotp \tup{G.\wmax{\loc},e} \in G.\lHBSC^?
\end{align*}
In turn, three additional BMs,
$\bmW_\HB, \bmW_\SC, \bmM_\SC : \Loc \to 2^\Loc$,
carry sufficient information for faithfully updating $\bmT_\HB$ and $\bmT_\SC$ in every execution step:
\begin{align*}
\loc\in \bmW_\HB(\loca) &\Leftrightarrow \tup{G.\wmax{\loc},G.\wmax{\loca}}\in G.\lHB^?  & &&
\loc\in \bmW_\SC(\loca) &\Leftrightarrow \tup{G.\wmax{\loc},G.\wmax{\loca}}\in G.\lHBSC^? \\
&&&& \loc\in \bmM_\SC(\loca) &\Leftrightarrow  \exists e \in G.\lE_\loca \ldotp \tup{G.\wmax{\loc},e} \in G.\lHBSC^?
\end{align*}
Initially, $\bmT_\HB = \bmT_\SC = \lambda \tid\ldotp \Loc$
and $\bmW_\HB = \bmW_\SC = \bmM_\SC = \lambda \loc\ldotp \set{\loc}$.
The maintenance of the instrumentation with each access to a shared variable is given in \cref{fig:bm:transitions}.
The rows correspond to operations performed by the program,
and each column describes how the instrumentation is updated.
When thread $\tid$ writes to location $\loc$, the location $\loc$ is added to $\bmT_\HB(\tid)$,
and removed from $\bmT_\HB(\tida)$ for all other threads $\tida\neq\tid$.
Indeed, after the write, $\tid$ is aware of the most recent write to $\loc$ but no other thread is aware of it.
In addition, we assign $\bmW_\HB(\loc) \eqassn \bmT_\HB(\tid)$ as the thread releases its view via the write to $\loc$,
and for all other locations $\loca \neq \loc$, the location $\loc$ is removed from $\bmW_\HB(\loca)$.
In turn, when $\tid$ reads from location $\loc$, it acquires the last view released to $\loc$, so we add
$\bmW_\HB(\loc)$ to $\bmT_\HB(\tid)$.
The \SC-BMs update rules are more involved, but are generally similar.
In particular, in order to track the $\lFR$ order (which is a part of $\lHBSC$),
read events propagate the thread view $\bmT_\SC$ to $\bmM_\SC$,
which is acquired by later writes to the same location.

\begin{figure}[t]
\centering
\resizebox{0.75\width}{!}{$
\begin{array}{c|cccc}\toprule
    &  & \text{for } \tida \neq \tid & & \text{for } \loca \neq \loc \\
    & \bmT_\HB(\tid) \eqassn & \bmT_\HB(\tida) \eqassn & \bmW_\HB(\loc)\eqassn  &  \bmW_\HB(\loca)\eqassn\\\midrule
\textbf{Read}
    & \bmT_\HB(\tid) \changed{{}\cup \bmW_\HB(\loc)}
    & \bmT_\HB(\tida)
    & \bmW_\HB(\loc)
    & \bmW_\HB(\loca)
    \\[1ex]
\textbf{Write}
    & \bmT_\HB(\tid) \changed{{}\cup \set{\loc}}
    & \bmT_\HB(\tida) \changed{{}\setminus \set{\loc}}
    & \changed{\bmT_\HB(\tid) \cup \set{\loc}}
    & \bmW_\HB(\loca) \changed{{}\setminus \set{\loc}}
    \\\bottomrule
\end{array}$}
\\[0.5ex]
\resizebox{0.75\width}{!}{$
\begin{array}{c|cccccc}\toprule
    & & \text{for } \tida \neq \tid & & \text{for } \loca \neq \loc
    & & \text{for } \loca \neq \loc \\
    & \bmT_\SC(\tid) \eqassn & \bmT_\SC(\tida) \eqassn & \bmW_\SC(\loc) \eqassn & \bmW_\SC(\loca) \eqassn
    & \bmM_\SC(\loc) \eqassn & \bmM_\SC(\loca) \eqassn \\\midrule
\textbf{Read}
    & \bmT_\SC(\tid) \changed{{}\cup \bmW_\SC(\loc)}
    & \bmT_\SC(\tida)
    & \bmW_\SC(\loc)
    & \bmW_\SC(\loca)
    & \bmM_\SC(\loc) \changed{{}\cup \bmT_\SC(\tid)}
    & \bmM_\SC(\loca)
    \\[1ex]
\textbf{Write}
    & \bmT_\SC(\tid) \changed{{}\cup \bmM_\SC(\loc)}
    & \bmT_\SC(\tida) \changed{{}\setminus \set{\loc}}
    & \changed{\bmT_\SC(\tid) \cup \bmM_\SC(\loc)}
    & \bmW_\SC(\loca) \changed{{}\setminus \set{\loc}}
    & \bmM_\SC(\loc) \changed{{}\cup \bmT_\SC(\tid)}
    & \bmM_\SC(\loca) \changed{{}\setminus \set{\loc}} \\\bottomrule
\end{array}$}
\caption{Maintaining BMs when thread $\tid$ accesses location $\loc$}%
\label{fig:bm:transitions}
\end{figure}

\paragraph{Robustness Verification}
All in all,~\cite{pldi19} established the following result:

\begin{theorem}%
\label{thm:bm:robust}
A program $\prog$ is robust iff running $\prog$ under \SC
with the instrumentation given in \cref{fig:bm:transitions}
never reaches a state in which thread $\tid$ is about to access some location
$\loc \in \bmT_\SC(\tid) \setminus \bmT_\HB(\tid)$.
\end{theorem}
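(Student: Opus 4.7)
The plan is to combine two ingredients: the semantic characterization of non-robustness via the \emph{non-robustness witnesses} recalled just before the theorem, and a correctness invariant showing that the instrumentation of \cref{fig:bm:transitions} makes the test $\loc \in \bmT_\SC(\tid) \setminus \bmT_\HB(\tid)$ decide the witness condition.

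First I translate the witness condition into the BM test. The defining equations of $\bmT_\HB(\tid)$ and $\bmT_\SC(\tid)$ stated just above \cref{fig:bm:transitions} say exactly that $\loc \in \bmT_\HB(\tid)$ iff there is a $G.\lHB$-path from $G.\wmax{\loc}$ into an event of thread $\tid$ (or initialization), and similarly with $G.\lHBSC$ for $\bmT_\SC(\tid)$. The two conditions defining a non-robustness witness for $\tid$ about to access $\loc$ are (i) absence of HB-awareness of $G.\wmax{\loc}$ in $\tid$, i.e.\ $\loc \notin \bmT_\HB(\tid)$, and (ii) presence of HBSC-awareness, i.e.\ $\loc \in \bmT_\SC(\tid)$. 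Hence a witness is reached exactly when $\loc \in \bmT_\SC(\tid) \setminus \bmT_\HB(\tid)$. By the cited result from prior work, $\prog$ is robust iff no witness is reachable, so the theorem reduces to proving that the instrumentation truly computes $\bmT_\HB, \bmT_\SC$ and their supporting $\bmW_\HB, \bmW_\SC, \bmM_\SC$ according to the stated equations.

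The core argument is an invariant preservation by induction on the length of the SC-run. The base case is immediate from the initial values, which match the defining equations for the graph consisting of only the initialization writes. For the inductive step I case-split on whether the next event is a read or a write and verify each column of \cref{fig:bm:transitions}. In the read case, the new event $r$ of $\tid$ reads from $G.\wmax{\loc}$ (since the run is under SC), so the only new HB-paths terminate at $r$ via $G.\lRF \cup G.\lPO$; this justifies folding $\bmW_\HB(\loc)$ into $\bmT_\HB(\tid)$, while entries for $\loca\neq\loc$ and $\tida\neq\tid$ remain unchanged because $G.\wmax{\cdot}$ is unchanged and $r$ is an $G.\lHB$-sink. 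In the write case, the fresh write $w$ replaces $G.\wmax{\loc}$, and the updates must simultaneously retarget ``the latest write to $\loc$'' and record new HB-edges out of $w$; the removals of $\loc$ from $\bmT_\HB(\tida)$ and from $\bmW_\HB(\loca)$ reflect that no event outside $w$ itself is yet HB-below $w$.

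The hard part is the HBSC cases, because $G.\lHBSC$ folds in $G.\lFR$, which is not introduced by any explicit edge but arises implicitly whenever a read of $\loc$ is overtaken by a later write to $\loc$. To handle this, $\bmM_\SC(\loc)$ serves as the key bookkeeping device: when a read by $\tid$ of $\loc$ executes, its $\bmT_\SC(\tid)$ is folded into $\bmM_\SC(\loc)$; a subsequent write to $\loc$ then acquires this set into its own $\bmT_\SC$ and $\bmW_\SC$, materializing precisely the anticipated $G.\lFR$-step followed by the stored HBSC-paths. The detailed verification of the update rules, together with checking that the removals on other locations and threads soundly forget obsolete HBSC-reachability to the retired $G.\wmax{\loc}$, is the bulk of the work. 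Once the joint invariant holds for all reachable states, the iff of the theorem follows immediately by combining it with the translation in the second paragraph.
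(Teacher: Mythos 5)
Your proposal takes essentially the same route as the paper: the theorem is stated there as a recap of the cited prior work, and its proof (mirrored by the paper's own proof of the LC analogue, \cref{thm:main}) consists of exactly your two ingredients --- the characterization of non-robustness via reachable non-robustness witnesses, combined with an inductive invariant showing that the updates of \cref{fig:bm:transitions} maintain the defining equations of $\bmT_\HB,\bmT_\SC,\bmW_\HB,\bmW_\SC,\bmM_\SC$, with $\bmM_\SC$ serving precisely as the device for the implicit $\lFR$-edges. (One cosmetic slip: the materialized $\lHBSC$-path runs from $G.\wmax{\loca}$ along the stored path to the read and \emph{then} across the new $\lFR$-edge to the later write, i.e., stored path followed by the $\lFR$-step, not the reverse order you wrote; this does not affect the argument.)
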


In this theorem, running under \SC means
using the standard interleaving-based operational semantics
with the memory state represented as a mapping from variables to values.

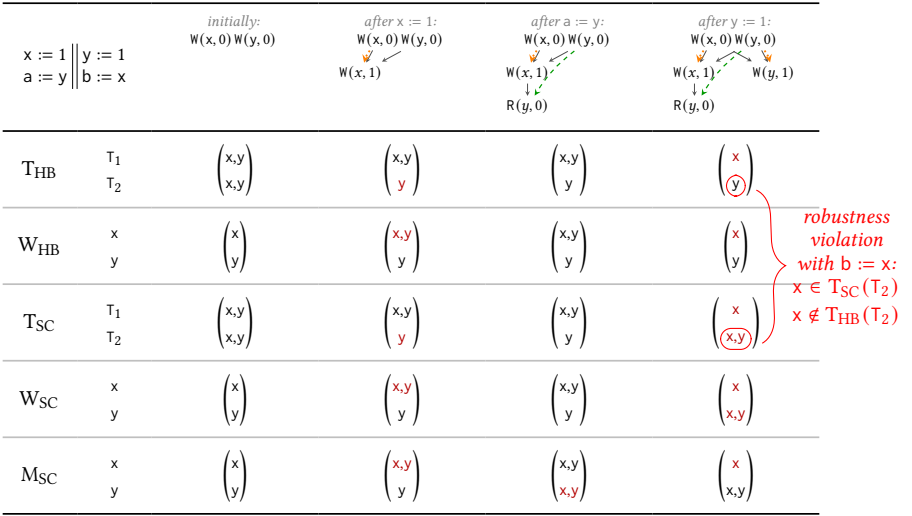
\begin{figure}
\centering\footnotesize
\(\begin{tblr}{rowhead=1, rows={3em,m}, cells={c}, hline{1,2,Z} = {0.08em, black}, hline{3-6} = {0.08em, lightgray},}%
\SetCell[c=2]{l}{{{\resizebox{1.5cm}{!}{\usebox{\sbex}}}}}   &      &	\resizebox{1.8cm}{!}{\usebox{\initb}}           &	\resizebox{1.8cm}{!}{\usebox{\wx}}                 &	\resizebox{1.8cm}{!}{\usebox{\wxry}}                      &	\resizebox{1.8cm}{!}{\usebox{\wxrywy}}      \\
\bmT_\HB & \exl{\ctid1}{\ctid2}     &\exmx{\cloc1,\cloc2}{\cloc1,\cloc2} &	\exmx{\cloc1,\cloc2}{\changed{\cloc2}} &	\exmx{\cloc1,\cloc2}{\cloc2}                  &	\exmx{\changed{\cloc1}}{\circled{\scriptstyle \cloc2}}\tikzmark{hb}          	\\
\bmW_\HB & \exl{\cloc1}{\cloc2}     &\exmx{\cloc1}{\cloc2} &	\exmx{\changed{\cloc1,\cloc2}}{\cloc2} &	\exmx{\cloc1,\cloc2}{\cloc2}                  &	\exmx{\changed{\cloc1}}{\cloc2}                   	\\

\bmT_\SC & \exl{\ctid1}{\ctid2}     &\exmx{\cloc1,\cloc2}{\cloc1,\cloc2} &	\exmx{\cloc1,\cloc2}{\changed{\cloc2}} &	\exmx{\cloc1,\cloc2}{\cloc2}                  &	\exmx{\changed{\cloc1}}{\circled{\changed{\scriptstyle \cloc1,\cloc2}}}\tikzmark{sc}  \\
\bmW_\SC & \exl{\cloc1}{\cloc2}     &\exmx{\cloc1}{\cloc2} &	\exmx{\changed{\cloc1,\cloc2}}{\cloc2} &	\exmx{\cloc1,\cloc2}{\cloc2}                  &	\exmx{\changed{\cloc1}}{\changed{\cloc1,\cloc2}} \\
\bmM_\SC & \exl{\cloc1}{\cloc2}     &\exmx{\cloc1}{\cloc2} &	\exmx{\changed{\cloc1,\cloc2}}{\cloc2} &	\exmx{\cloc1,\cloc2}{\changed{\cloc1,\cloc2}} & \exmx{\changed{\cloc1}}{\cloc1,\cloc2}	\\
\end{tblr}\)
\caption{BM-instrumentation maintenance for a run of \ref{prog:sb}}%
\label{fig:sb_bm}
\begin{tikzpicture}[remember picture,overlay,font=\footnotesize]
  \draw [red,decorate,decoration={brace,amplitude=10pt,raise=-1pt},xshift=-0.5cm,yshift=0pt]
  ($(pic cs:hb)+(0,-.2)$) -- ($(pic cs:sc)+(0,-.2)$) node [text width=2cm,black,midway,xshift=1.4cm]
  {\emph{\color{red}{\shortstack{robustness\\violation\\with $\assignInst{\creg2}{\cloc1}$:\\$\cloc1 \in \bmT_\SC(\ctid2)$\\$\cloc1\nin\bmT_\HB(\ctid2)$}}}};
\end{tikzpicture}
\end{figure}

\begin{example}%
\label{ex:bm_sb}
Consider again~\ref{prog:sb} from \cref{fig:intro_rob},
and its run described in \cref{ex:SB_witness}.
The BM instrumentation after each instruction is depicted in \cref{fig:sb_bm}.
When $\ctid1$ executes $\assignInst{\cloc1}{1}$, we remove $\cloc1$ from views associated with $\ctid2$.
Then, $\ctid1$ reads $0$ from $\cloc2$ and we add $\cloc1$ to
$\bmM_\SC(\cloc2)$ since this read of $\cloc2$ is $\lHBSC$-after the last write to $\cloc1$.
When $\ctid2$ then executes $\assignInst{\cloc2}{1}$,
it gains the `knowledge' from $\cloc2$, and we add $\cloc1$ to $\bmT_\SC(\ctid2)$.
Finally, when $\ctid2$ tries to read from $\cloc1$ we flag a robustness violation as:
$\cloc1 \in \bmT_\SC(\ctid2) \setminus \bmT_\HB(\ctid2)$.
\end{example}

\section{Location-Clock-Based Robustness Analysis}%
\label{sec:algorithm}

\newcommand{\figvc}{
\begin{figure}%
\centering
\resizebox{0.75\width}{!}{$
\begin{array}{c|cccc}\toprule
    & & \text{for } \tida \neq \tid & & \text{for } \loca \neq \loc \\
    & \vcT_\HB(\tid) \eqassn & \vcT_\HB(\tida) \eqassn & \vcW_\HB(\loc) \eqassn & \vcW_\HB(\loca) \eqassn \\\midrule
\textbf{Read}
    & \vcT_\HB(\tid) \changed{{}\sqcup \vcW_\HB(\loc)}
    & \vcT_\HB(\tida)
    & \vcW_\HB(\loc)
    & \vcW_\HB(\loca)
    \\[1ex]
\textbf{Write}
    & \vcT_\HB(\tid) \changed{{}\sqcup (\epoch{\loc}{\ts})}
    & \vcT_\HB(\tida)
    & \changed{\vcT_\HB(\tid) \sqcup (\epoch{\loc}{\ts})}
    & \vcW_\HB(\loca)
    \\\bottomrule
\end{array}$}
\\[2ex]
\resizebox{0.75\width}{!}{$
\begin{array}{c|cccccc}\toprule
    & & \text{for } \tida \neq \tid & & \text{for } \loca \neq \loc
    & & \text{for } \loca \neq \loc \\
    & \vcT_\SC(\tid) \eqassn & \vcT_\SC(\tida) \eqassn & \vcW_\SC(\loc) \eqassn & \vcW_\SC(\loca) \eqassn
    & \vcM_\SC(\loc) \eqassn & \vcM_\SC(\loca) \eqassn \\\midrule
\textbf{Read}
    & \vcT_\SC(\tid) \changed{{}\sqcup \vcW_\SC(\loc)}
    & \vcT_\SC(\tida)
    & \vcW_\SC(\loc)
    & \vcW_\SC(\loca)
    & \vcM_\SC(\loc) \changed{{}\sqcup \vcT_\SC(\tid)}
    & \vcM_\SC(\loca)
    \\[1ex]
\textbf{Write}
    & \vcT_\SC(\tid) \changed{{}\sqcup \vcM_\SC(\loc) \sqcup (\epoch{\loc}{\ts})}
    & \vcT_\SC(\tida)
    & \changed{\vcT_\SC(\tid) \sqcup \vcM_\SC(\loc) \sqcup (\epoch{\loc}{\ts})}
    & \vcW_\SC(\loca)
    & \vcM_\SC(\loc) \changed{{}\sqcup \vcT_\SC(\tid) \sqcup (\epoch{\loc}{\ts})}
    & \vcM_\SC(\loca)
    \\\bottomrule
\end{array}$}
\caption{Maintaining LCs when thread $\tid$ accesses location $\loc$,
where $\ts = {\vcW_\HB}(\loc)(\loc) +1= {\vcW_\SC}(\loc)(\loc) +1$ }%
\label{fig:vc:transitions_ra}
\end{figure}
}

The instrumentation in \cref{fig:bm:transitions}, previously utilized for static robustness verification,
can also be directly used in a dynamic analysis.
However, this approach presents significant limitations, as detailed in \cref{sec:BM_bad}.
Instead, we introduce a novel instrumentation for robustness in \cref{sec:LC},
which is better suited for dynamic analysis.

\subsection{Drawbacks of BMs for Dynamic Analysis}%
\label{sec:BM_bad}

The BM instrumentation was specifically developed for static verification,
and, as is, it is not suitable for a dynamic analysis.
This stems from two reasons.

\paragraph{Blocking Instrumentation}
For static verification, it is essential that the instrumentation remains constant in size
(assuming a fixed number of threads and memory locations).
Constant-size instrumentation, like the BM instrumentation, is necessary for reducing the robustness verification problem
to a reachability problem in a finite-state machine.
In turn, for a dynamic approach, a constant-size instrumentation is unnecessary
and different requirements come into play.
A primary consideration in this setting is the extent to which concurrency is compromised
in the instrumented program compared to the uninstrumented one.
In this regard, BMs pose significant drawbacks:
any write operation %
necessitates updating the
BMs across \emph{all threads and locations} (see \cref{fig:bm:transitions}),
as they all lose ``awareness'' of the last write.
This update must occur atomically, thereby blocking concurrent threads from writing simultaneously.
Consequently, maintaining the BMs effectively serializes all memory writes, severely limiting parallelism in the original program.

\paragraph{Non-Predictive Instrumentation}
For static verification, there is no need to \emph{predict} the existence
of a non-robustness witness from an execution graph that itself is not a witness,
as we anyway exhaustively account for all possible executions.
However, dynamic analysis, by nature, observes specific executions,
and the inability to predict a non-robustness witness is a significant drawback.
Without predictive capabilities,
the analysis might miss non-robustness witnesses that rarely appear in the observed executions.
The next example demonstrates this point.

\begin{example}%
\label{ex:sb_writes}
Consider the following variant of~\ref{prog:sb}:

\begin{minipage}{0.3\textwidth}
\begin{equation*}
\inarrII{\assignInst{\cloc1}{1} \\ \assignInst{\creg1}{\cloc2} \\ \assignInst{\cloc1}{2}}
{\assignInst{\cloc2}{1} \\ \assignInst{\creg2}{\cloc1} \\ \assignInst{\cloc2}{2}}
\end{equation*}
\end{minipage}
\hfill
\begin{minipage}{0.6\textwidth}
This program is not robust: as with~\ref{prog:sb},
it is possible under weak memory for both threads to read the initial value of $0$,
an outcome not permitted under \SC.
\end{minipage}

\smallskip
\noindent
The BM-based algorithm is complete and, indeed, it can detect a non-robustness witness in certain runs.
For instance, in a run where $\ctid1$ executes its first two instructions
and then $\ctid2$ follows, just before $\assignInst{\creg2}{\cloc1}$,
we observe that $\cloc1 \in \bmT_\SC(\ctid2) \setminus \bmT_\HB(\ctid2)$
and flag a robustness violation.

However, consider a run in which thread $\ctid1$ terminates before $\ctid2$ begins.
In this sequence, the last write to $\cloc1$ when $\ctid2$ executes
is the event from $\assignInst{\cloc1}{2}$.
Since there is no $\lHBSC$ path connecting this write event to thread $\ctid2$,
we find $\cloc1 \nin \bmT_\SC(\ctid2)$,
meaning that BM does not detect a violation in this run.
This reveals a limitation of \cref{thm:bm:robust} for dynamic verification:
even when a prefix of the execution graph witnesses a robustness violation, the current run itself does not flag this issue.

In fact, running a BM-based dynamic analysis on this program over 1,000 executions did not detect the violation.
This is because the violation is captured in the BM-based analysis only with a context switch after one thread's first two instructions. Given the short program length, such a context switch is rare without explicitly adjusting the OS scheduling.
\end{example}

\subsection{Location Clocks to the Rescue}%
\label{sec:LC}

To overcome the limitations of BMs, we introduce \emph{location clocks},
a variant of vector clocks, and show how to adapt the instrumentation to use location clocks.
Location clocks (LCs) have two primary advantages over over BMs.
First, LCs are more efficient when maintaining instrumentation,
as a thread accessing a location only needs to update the location clocks associated with
that specific thread and location.
This is critical for dynamic analysis,
as it enables locking only the instrumentation relevant to the accessed location,
and such fine-grained locking allows for significantly higher concurrency during the (instrumented) program execution.
Second, LCs support the \emph{prediction} of non-robustness witnesses,
effectively addressing the challenge illustrated in \cref{ex:sb_writes}.
We now proceed with the formal definitions.

\paragraph{Location Clocks}
Let $\Time$ denote the set of natural numbers.
We refer to the elements of $\Time$ as \emph{timestamps} and use $\ts$ to range over them.
A \emph{location epoch} is a pair $\epoch{\loc}{\ts}$
where $\loc\in\Loc$ and $\ts\in\Time$, and a
\emph{location clock} (or LC) $\vc$ is a set of location epochs such that
each location $\loc$ appears in at most one location epoch in $\vc$.
We identify a location epoch $\epoch{\loc}{\ts}$ with a singleton LC $\set{\epoch{\loc}{\ts}}$.
An LC $\vc$ essentially represents a function from $\Loc$ to $\Time$,
assigning the (unique) timestamp $\ts$ such that $\epoch{\loc}{\ts}\in \vc$ for every $\loc\in\Loc$
that appears in $\vc$, and $0$ for every other $\loc\in\Loc$.
We often identify LCs with the functions they represent,
writing, \eg~$\vc(\loc)$ for the timestamp $\vc$ assigns to $\loc$.
The empty LC, representing the function $\lambda \loc \ldotp 0$, is denoted by $\vc_\Init$,
and we let $\vc_1 \sqcup \vc_2 \defeq \lambda \loc \ldotp \max \set{\vc_1(\loc), \vc_2(\loc)}$.

\paragraph{Location-Clock Instrumentation for Robustness}
LCs can replace BMs for robustness verification.
Intuitively, each write to a location $\loc$ obtains a monotonically increasing timestamp,
and we keep track of the last timestamp for each location that each thread ``knows about'' in the $\lHB$ and $\lHBSC$ relations.
Formally, given an execution graph $G$, the modification order $G.\lMO$ is interpreted as
a timestamp mapping $G.\lTS: G.\sW \to \Time$ defined by
$G.\lTS(w) \defeq \size{\set{w'\in G.\sW \st \tup{w',w}\in G.\lMO}}$,
\ie the number of write events that are $\lMO$-before $w$.
The instrumentation consists of mappings:
$\vcT_\HB$ and $\vcT_\SC$ that assign an LC to every thread;
and $\vcW_\HB$, $\vcW_\SC$, and $\vcM_\SC$ that assign an LC to every location.
The meaning of these LCs for a given execution graph $G$ is summarized as follows:
\begin{align*}
\vcT_\HB(\tid)(\loc) & =\max\set{G.\lTS(w) \st w \in \sW_\loc \land \exists e \in G.\lE^\tid \ldotp \tup{w,e} \in G.\lHB^?}
\\
\vcW_\HB(\loca)(\loc) & =\max\set{G.\lTS(w) \st w \in \sW_\loc \land \tup{w,G.\wmax{\loca}}\in G.\lHB^?}
\\
\vcT_\SC(\tid)(\loc) & =\max\set{G.\lTS(w) \st w \in \sW_\loc \land \exists e \in G.\lE^\tid \ldotp \tup{w,e} \in G.\lHBSC^?}
\\
\vcW_\SC(\loca)(\loc) & =\max\set{G.\lTS(w) \st w \in \sW_\loc \land \tup{w,G.\wmax{\loca}}\in G.\lHBSC^?}
\\
\vcM_\SC(\loca)(\loc) & =\max\set{G.\lTS(w) \st w \in \sW_\loc \land \exists e \in G.\lE_\loca \ldotp \tup{w,e} \in G.\lHBSC^?}
\end{align*}
Above we take $\max\emptyset \defeq 0$.
Note that for every $\loc\in\Loc$, $\vcW_\HB(\loc)(\loc)=\vcW_\SC(\loc)(\loc)$,
both storing the maximal timestamp among all writes to $\loc$
(which is $\size{G.\sW_\loc}-1$).

\figvc

Initially, all LCs are initialized to $\vc_\Init$.
The maintenance of the instrumentation with each access to a shared variable is given in \cref{fig:vc:transitions_ra}.
The rows correspond to operations performed by the program,
while each column describes how the instrumentation is updated.
For example, when thread $\tid$ reads from location $\loc$,
$\tid$ incorporates $\loc$'s view into its own view, thus ``learning'' whatever $\loc$ ``knew''.
When $\tid$ writes to location $\loc$, the timestamp for $\loc$ is incremented,
and added to ${\vcT_\HB}(\tid)$.
In addition, $\tid$ ``releases its knowledge'',
sharing it with the location so that any thread that will read $\loc$ in the future will be able to absorb it.
As a side effect, due to increment of the timestamp,
all other locations no longer have the maximal timestamp for $\loc$ and thus are unaware of the last write to $\loc$.
The \SC transitions follow the same intuitions.
In particular, to track $\lFR$, read events propagate the thread view $\vcT_\SC$ to $\vcM_\SC$,
which is acquired by later writes to the same location.

To test for robustness violation, when a thread accesses $\loc$,
we compare the maximal timestamp for $\loc$ the thread ``knows about''
in $\lHBSC$ and in $\lHB${}\@.
If the first is strictly greater, we flag a violation.

\begin{theorem}%
\label{thm:main}
A program $\prog$ is robust iff running $\prog$ under \SC with the instrumentation given in \cref{fig:vc:transitions_ra}
never reaches a state in which thread $\tid$ is about to access location $\loc$ but
	${\vcT_\HB}(\tid)(\loc) < {\vcT_\SC}(\tid)(\loc)$.
\end{theorem}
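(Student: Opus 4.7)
The plan is to reduce the statement to \cref{thm:bm:robust} by exploiting the semantic meanings the LC maps are designed to track. I would proceed in three stages.

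\emph{Stage 1 (LC invariant).} First I would prove, by induction on the length of an SC run of $\prog$, that at every reachable state the five semantic identities listed just before \cref{fig:vc:transitions_ra} hold for $\vcT_\HB, \vcT_\SC, \vcW_\HB, \vcW_\SC, \vcM_\SC$ w.r.t.\ the current execution graph $G$. The only subtle case is a write to $\loc$: the new event $w$ must carry the fresh timestamp $G.\lTS(w) = |G.\sW_\loc| - 1$; by the inductive hypothesis this equals $\vcW_\HB(\loc)(\loc){+}1 = \vcW_\SC(\loc)(\loc){+}1$, exactly matching the choice of $\ts$ in \cref{fig:vc:transitions_ra}. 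The remaining cases are routine checks that each $\sqcup$-update captures precisely the newly added $\lHB$/$\lHBSC$-reachability.

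\emph{Stage 2 (``if'' direction).} Suppose $\prog$ is not robust. By \cref{thm:bm:robust}, some SC run reaches a state $(G,\tid,\loc)$ with $\tid$ about to access $\loc$ and $\loc \in \bmT_\SC(\tid) \setminus \bmT_\HB(\tid)$. Unpacking the BM semantics, $G.\wmax{\loc}$ is $\lHBSC$-before some event of $\tid$ but $\lHB$-before none. The LC identities from Stage 1 then give $\vcT_\SC(\tid)(\loc) = G.\lTS(G.\wmax{\loc})$ and $\vcT_\HB(\tid)(\loc) < G.\lTS(G.\wmax{\loc})$ in that same state, so the LC trigger fires.

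\emph{Stage 3 (``only-if'' direction).} Conversely, given an SC run ending in a state $(G,\tid,\loc)$ with $\vcT_\HB(\tid)(\loc) < \vcT_\SC(\tid)(\loc)$, I would let $e_\tid$ be the $\lPO$-maximum event of $\tid$ in $G$ (which exists, else $\vcT_\SC(\tid)(\loc) = 0$), and pick $w \in G.\sW_\loc$ with $G.\lTS(w) = \vcT_\SC(\tid)(\loc)$ and $\tup{w,e_\tid} \in G.\lHBSC^?$. Let $G^\star$ be the restriction of $G$ to the $\lHBSC$-downward closure of $\{e_\tid\}$. Since $\lPO, \lRF, \lMO \subseteq \lHBSC$, the set $G^\star.\lE$ is closed under $\lPO$, $\lRF$, and $\lMO$, so $G^\star$ is a valid SC-consistent prefix execution graph generated by some SC run $\rho^\star$ of $\prog$. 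Because $G^\star.\lE^\tid = G.\lE^\tid$ (as $\lPO \subseteq \lHBSC$), the local state of $\tid$ at the end of $\rho^\star$ agrees with its state in the original run, so $\tid$ is still about to access $\loc$. Maximality of $G.\lTS(w)$ among $\lHBSC$-predecessors of $\tid$-events forces $w = G^\star.\wmax{\loc}$; the witnessing path from $w$ to $e_\tid$ lies entirely inside $G^\star$, so $\tup{w,e_\tid} \in G^\star.\lHBSC^?$; and no $\lHB$-path from $w$ to any $\tid$-event can exist in $G^\star$, else it would survive in $G$ and contradict $\vcT_\HB(\tid)(\loc) < G.\lTS(w)$. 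Hence the BM condition fires at the final state of $\rho^\star$, and \cref{thm:bm:robust} concludes that $\prog$ is non-robust.

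The hard part will be Stage~3, and inside it the step that an $\lHBSC$-downward closed restriction of an SC-consistent execution graph of $\prog$ is itself generated by some SC run of $\prog$. This rests on the standard correspondence between SC linearisations of closed prefix graphs and prefixes of SC runs, which has to be stated carefully for the program-to-graph translation but is not conceptually novel.
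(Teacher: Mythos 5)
Your proposal is correct and follows essentially the same route as the paper's own proof: the paper likewise first establishes the semantic LC invariant (your Stage~1), obtains the forward direction from the witness theorem of \citet{pldi19} underlying \cref{thm:bm:robust} (your Stage~2), and proves the converse by restricting $G$ to the $\lHBSC$-downward closure of the thread's events, arguing the restricted graph is generated by another \SC{} run, and checking the standard witness conditions there (your Stage~3). The only cosmetic difference is that the paper names your intermediate configuration an ``extended non-robustness witness'' and selects the $\lMO$-maximal qualifying write $w_0$, which coincides with your timestamp-maximal choice of $w$.
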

\begin{proof}[Proof (sketch)]
We first show that the LC instrumentation indeed tracks the timestamps as described above.
Then, if $\prog$ is not robust, then by~\cite[Thm. 5.1.]{pldi19}, there exists a non-robustness witness for $\prog$,
which implies that under \SC $\prog$ generates some (\SC-consistent) execution graph $G$
with $\tup{G.\wmax{\loc},e}\in G.\lHBSC$ for some $e\in G.\lE^\tid$ but $\tup{G.\wmax{\loc},e}\nin G.\lHB$
for every $e\in G.\lE^\tid$, and the next step of thread $\tid$ is an access to location $\loc$.
In that run, we will have ${\vcT_\HB}(\tid)(\loc) < {\vcT_\SC}(\tid)(\loc) = G.\lTS(G.\wmax{\loc})$.

For the converse, we first note that a run of $\prog$ under \SC that reaches
a state in which thread $\tid$ is about to access location $\loc$ but
	${\vcT_\HB}(\tid)(\loc) < {\vcT_\SC}(\tid)(\loc)$	implies the existence of an 
	\emph{extended non-robustness witness}.
	The latter is defined as a configuration in which some thread $\tid$ is about to access a location $\loc$
and the following hold for the (\SC-consistent) generated execution graph $G$ and some $w \in G.\sW_\loc$:
\begin{enumerate}
\item The next access of $\tid$ races with $w$, as well as with every write $w'$ that is $G.\lMO$-after $w$, 
\ie $\tup{w,e} \nin G.\lMO^? \seq G.\lHB$ for every $e \in \sE^\tid$.
\item The next access of $\tid$ cannot have been executed before $w$,
\ie  $\tup{w,e} \in G.\lHBSC$ for some $e \in \sE^\tid$.
\end{enumerate}
Now, the existence of an extended non-robustness witness implies a standard non-robustness witness,
from which by~\cite[Thm. 5.1.]{pldi19}, it follows that $\prog$ is not robust.
Indeed, let $w_0$ be the $G.\lMO$-maximal such write event,
and construct an execution graph $G'$ by restricting $G$ to all nodes $e'$ such that
$\tup{e',e}\in G.\lHBSC$. It is easy to see that there exists another run of $\prog$
under \SC that reaches a state in which thread $\tid$ is about to access location $\loc$,
and the generated graph in this run is $G'$.
Moreover, we have $G'.\wmax{\loc} = w_0$, and so
$\tup{G'.\wmax{\loc},e}\in G'.\lHBSC$ for some $e\in G'.\lE^\tid$
but $\tup{G'.\wmax{\loc},e}\nin G'.\lHB$ for every $e\in G'.\lE^\tid$.
\end{proof}

\Cref{thm:main} provides the formal basis for the guarantees of the dynamic analysis.
The '$\Rightarrow$' direction ensures soundness, meaning no spurious violations.
The converse ensures completeness,
indicating that for a non-robust program, some schedule exposes the violation.

\newcommand{\exvc}[2]{%
\begin{adjblockarray}{c}
\begin{adjblockarray}{\{c\}}
  {\scriptstyle #1}
\end{adjblockarray}\\
\begin{adjblockarray}{\{c\}}
  {\scriptstyle #2}
\end{adjblockarray}
\end{adjblockarray}
}

 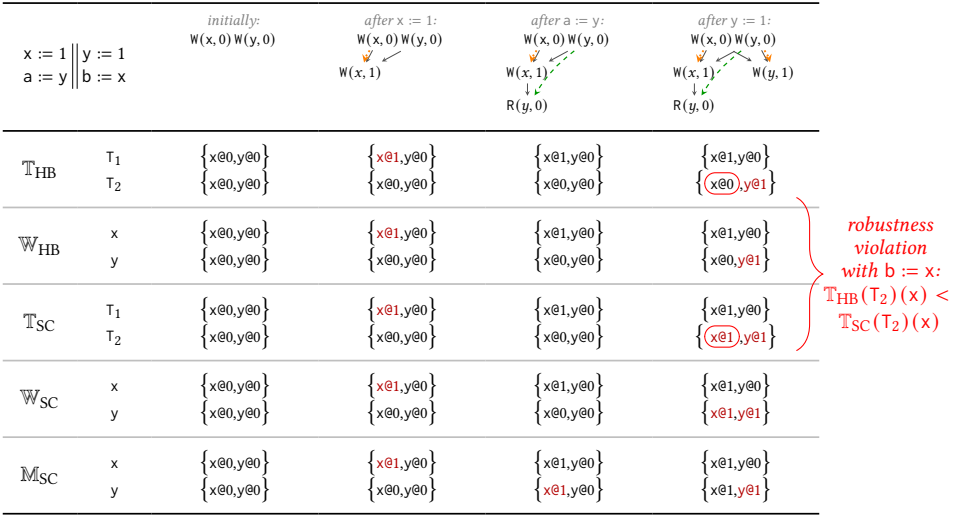
\begin{figure}
\centering\footnotesize
\(\begin{tblr}{rowhead=1, rows={3em,m}, cells={c}, hline{1,2,Z} = {0.08em, black}, hline{3-6} = {0.08em, lightgray},}%
\SetCell[c=2]{l}{{{\resizebox{1.5cm}{!}{\usebox{\sbex}}}}}   &      &	\resizebox{1.8cm}{!}{\usebox{\initb}}           &	\resizebox{1.8cm}{!}{\usebox{\wx}}                 &	\resizebox{1.8cm}{!}{\usebox{\wxry}}                      &	\resizebox{1.8cm}{!}{\usebox{\wxrywy}}      \\
\vcT_\HB & \exl{\ctid1}{\ctid2} &\exvc{\epoch{\cloc1}{0},\epoch{\cloc2}{0}}{\epoch{\cloc1}{0},\epoch{\cloc2}{0}} &	\exvc{\changed{\epoch{\cloc1}{1}},\epoch{\cloc2}{0}}{\epoch{\cloc1}{0},\epoch{\cloc2}{0}} &	\exvc{\epoch{\cloc1}{1},\epoch{\cloc2}{0}}{\epoch{\cloc1}{0},\epoch{\cloc2}{0}} &	   \exvc{\epoch{\cloc1}{1},\epoch{\cloc2}{0}}{\circled{\scriptstyle\epoch{\cloc1}{0}},\changed{\epoch{\cloc2}{1}}}\tikzmark{vhb}          	\\
\vcW_\HB & \exl{\cloc1}{\cloc2} &\exvc{\epoch{\cloc1}{0},\epoch{\cloc2}{0}}{\epoch{\cloc1}{0},\epoch{\cloc2}{0}} &	\exvc{\changed{\epoch{\cloc1}{1}},\epoch{\cloc2}{0}}{\epoch{\cloc1}{0},\epoch{\cloc2}{0}} &	\exvc{\epoch{\cloc1}{1},\epoch{\cloc2}{0}}{\epoch{\cloc1}{0},\epoch{\cloc2}{0}} &	   \exvc{\epoch{\cloc1}{1},\epoch{\cloc2}{0}}{\epoch{\cloc1}{0},\changed{\epoch{\cloc2}{1}}} 	                       	\\

\vcT_\SC & \exl{\ctid1}{\ctid2} &\exvc{\epoch{\cloc1}{0},\epoch{\cloc2}{0}}{\epoch{\cloc1}{0},\epoch{\cloc2}{0}} &	\exvc{\changed{\epoch{\cloc1}{1}},\epoch{\cloc2}{0}}{\epoch{\cloc1}{0},\epoch{\cloc2}{0}} &	\exvc{\epoch{\cloc1}{1},\epoch{\cloc2}{0}}{\epoch{\cloc1}{0},\epoch{\cloc2}{0}} &	   \exvc{\epoch{\cloc1}{1},\epoch{\cloc2}{0}}{\circled{\changed{\scriptstyle\epoch{\cloc1}{1}}},\changed{\epoch{\cloc2}{1}}}\tikzmark{vsc}  \\
\vcW_\SC & \exl{\cloc1}{\cloc2} &\exvc{\epoch{\cloc1}{0},\epoch{\cloc2}{0}}{\epoch{\cloc1}{0},\epoch{\cloc2}{0}} &	\exvc{\changed{\epoch{\cloc1}{1}},\epoch{\cloc2}{0}}{\epoch{\cloc1}{0},\epoch{\cloc2}{0}} &	\exvc{\epoch{\cloc1}{1},\epoch{\cloc2}{0}}{\epoch{\cloc1}{0},\epoch{\cloc2}{0}} &	   \exvc{\epoch{\cloc1}{1},\epoch{\cloc2}{0}}{\changed{\epoch{\cloc1}{1}},\changed{\epoch{\cloc2}{1}}} \\
\vcM_\SC & \exl{\cloc1}{\cloc2} &\exvc{\epoch{\cloc1}{0},\epoch{\cloc2}{0}}{\epoch{\cloc1}{0},\epoch{\cloc2}{0}} &	\exvc{\changed{\epoch{\cloc1}{1}},\epoch{\cloc2}{0}}{\epoch{\cloc1}{0},\epoch{\cloc2}{0}} &	\exvc{\epoch{\cloc1}{1},\epoch{\cloc2}{0}}{\changed{\epoch{\cloc1}{1}},\epoch{\cloc2}{0}} &\exvc{\epoch{\cloc1}{1},\epoch{\cloc2}{0}}{\epoch{\cloc1}{1},\changed{\epoch{\cloc2}{1}}}	\\
\end{tblr}\)
 \caption{LC-instrumentation maintenance for a run of \ref{prog:sb}}%
 \label{fig:sb_vc}
\begin{tikzpicture}[remember picture,overlay,font=\footnotesize]
  \draw [red,decorate,decoration={brace,amplitude=10pt,raise=1pt},xshift=-0.5cm,yshift=0pt]
  ($(pic cs:vhb)+(0,-.3)$) -- ($(pic cs:vsc)+(0,-.3)$) node [text width=2cm,black,midway,xshift=1.4cm]
        {\emph{\color{red}{\shortstack{robustness\\violation\\with $\assignInst{\creg2}{\cloc1}$:\\$\vcT_\HB(\ctid2)(\cloc1) <{}$\\$\vcT_\SC(\ctid2)(\cloc1)$}}}};
\end{tikzpicture}
 \end{figure}

\begin{example}\label{ex:vc_sb}
Consider again~\ref{prog:sb} from \cref{fig:intro_rob},
and its run described in \cref{ex:SB_witness}.
The LC instrumentation after each instruction is depicted in \cref{fig:sb_vc}.
When $\ctid1$ executes $\assignInst{\cloc1}{1}$, we increment $\cloc1$ timestamp
and update the LCs for $\ctid1$ and $\cloc1$.
Then, $\ctid1$ reads $0$ for $\cloc2$ and we update the timestamp for  $\cloc1$ in
$\vcM_\SC(\cloc2)$ since the read of $\cloc2$ is aware of the last write to $\cloc1$.
When $\ctid2$ then executes $\assignInst{\cloc2}{1}$,
it learns about the events from $\cloc2$, specifically the last write to $\cloc1$.
Finally, when $\ctid2$ tries to read from $\cloc1$ we flag a robustness violation as:
${\vcT_\HB}(\ctid2)(\cloc1) < {\vcT_\SC}(\ctid2)(\cloc1)$.
To see the advantage over BMs compare with \cref{ex:bm_sb}.
For instance, when $\ctid2$ executes $\assignInst{\cloc2}{1}$,
we only update the LCs associated with $\ctid2$ and $\cloc2$, whereas in the BM instrumentation
we needed to update almost all BMs.
\end{example}

We note that the space required for the LC instrumentation is
$O((\size{\Tid}+\size{\Loc})\cdot\size{\Loc}\cdot \log n)$,
where $n$ represents the number of write events in the analyzed run.
The $\log n$ factor arises from the need to record timestamps.
In practice, however, timestamps are bounded and stored as large integers ($64$ bits),
making the actual memory consumption independent of the length of the run.

\paragraph{Predictive Analysis Using LCs}
We highlight the predictive advantage of \cref{thm:main}
over \cref{thm:bm:robust} from~\cite{pldi19}.
First, revisiting \cref{ex:sb_writes}, note that if we run
$\ctid1$ to completion before $\ctid2$ begins, 
we still have that ${\vcT_\HB}(\ctid2)(\cloc1) < {\vcT_\SC}(\ctid2)(\cloc1)$
when $\ctid2$ tries to read from $\cloc1$.
Indeed, the final write by $\ctid1$ to $\cloc1$ does not change 
the LCs associated with $\ctid2$.
As a result, the LC-based analysis based on \cref{thm:main}
detects the violation in \emph{any run} of the program (as confirmed in our experiments).

More generally, LCs are able to detect \emph{extended} non-robustness witnesses,
as defined in the proof of \cref{thm:main}, which imply the existence of a non-robustness witness in a different run. 
One can use LCs to detect standard witnesses (without prediction), by 
a straightforward adaptation of \cref{thm:bm:robust} to LCs, based on the following 
relationship between the LC and BM instrumentations:
\[
	{\vcT_\HB}(\tid)(\loc) = {\vcW_\HB}(\loc)(\loc) \Leftrightarrow \loc\in\bmT_\HB(\tid) \qquad\qquad 
	 {\vcT_\SC}(\tid)(\loc) = {\vcW_\SC}(\loc)(\loc) \Leftrightarrow \loc\in\bmT_\SC(\tid)
\]
Thus, to detect standard witnesses we should flag a violation
when thread $\tid$ is about to access location $\loc$ and 
${\vcT_\HB}(\tid)(\loc) <  {\vcW_\HB}(\loc)(\loc) = {\vcW_\SC}(\loc)(\loc) = {\vcT_\SC}(\tid)(\loc)$,
which means that $\tid$ is aware of the globally maximal write to $\loc$ in $\lHBSC$, yet
unaware of the globally maximal write to $\loc$ in $\lHB$.
By contrast, the (weaker) condition ${\vcT_\HB}(\tid)(\loc) < {\vcT_\SC}(\tid)(\loc)$ in \cref{thm:main}
means that the maximal write to $\loc$ that $\tid$ is $\lHB$-aware of
is $\lMO$-before the maximal write to $\loc$ that $\tid$ is $\lHBSC$-aware of.
The latter condition allows the LC-based analysis to be predictive,
which enhances the non-robustness detection rate.

\paragraph{Relation to Race Detection}
We discuss the relation to standard dynamic race-detection that uses vector clocks
(see \cite{Mansky17} for a formal description).
We observe that extended non-robustness witnesses are races on atomic accesses,
but, of course, not every such race imply non-robustness.
Besides identifying a race on location $\loc$ between an event $e_1$ and a later executed event $e_2$
(\ie two accesses accessing $\loc$ such that at least one of them is a write,
and $\tup{e_1,e_2} \nin G.\lHB \seq G.\lPO$), 
in extended non-robustness witnesses one also requires that:
\begin{enumerate*}[label=(\roman*)]
\item $e_1$ has to be a write event
(so we do not check for so-called ``read-write'' races);
\item every write event to $\loc$ executed after $e_1$ races with $e_2$
(\ie $\tup{e_1,e_2} \nin G.\lMO \seq G.\lHB \seq G.\lPO$);
and \item $e_2$ must be executed after $e_1$ in any run under \SC
(\ie $\tup{e_1,e_2} \in G.\lHBSC \seq G.\lPO$).
\end{enumerate*}
Tracking such patterns in a dynamic analysis requires non-trivial extensions of standard race detection.
Our technique does so by switching from standard vector clocks 
(functions from $\Tid$ to $\Time$) to location clocks (functions from $\Loc$ to $\Time$).

\section{Key Enhancements to the Basic Approach}%
\label{sec:extensions}

We introduce several essential extensions to the basic approach, 
broadening its applicability to real-world concurrent programs. 
We discuss support for read-modify-write instructions (\cref{sec:rmws}),
extension to the full \RLX mode (\cref{sec:rc20}),
and a technique for masking benign robustness violations caused by busy loops (\cref{sec:wait}).
This discussion remains informal, focusing on the main concepts behind each extension. 
Details integrating all extensions are provided in
\iflong
\cref{app:rlx}.
\else
\cite[§A]{appendix}.
\fi

\subsection{Supporting Read-Modify-Writes}%
\label{sec:rmws}

Our discussion so far has focused on reads and writes, 
omitting atomic read-modify-write (RMW) instructions, 
such as fetch-and-add (FADD) and compare-and-swap (CAS)\footnote{The extension discussed
in this section applies to \emph{weak} CAS, 
which may fail even when it reads the expected value. 
To support \emph{strong} CAS, this extension must be combined with the value tracking discussed in~\cref{sec:wait}. 
Our full development, along with the accompanying tool, supports both types of CAS.}, 
which are indispensable in concurrent programs. 
Our method extends to support RMWs.

To model RMWs in execution graphs, C11 uses RMW events, 
labeled with $\ulab{}{\loc}{\val_\lR}{\val_\lW}$, 
denoting the location $\loc$ being updated, 
the value $\val_\lR$ read from $\loc$, 
and the value $\val_\lW$ that replaces $\val_\lR$. 
RMW events behave both like reads (\eg~require an incoming $\lRF$-edge to justify the value being read) 
and writes (\eg~totally ordered per location with other writes via $\lMO$). 
The distinctiveness of RMWs, beyond simply combining a read and a write, 
is captured by another constraint in consistent graphs:\footnote{With $\lU$ events, the definition of $G.\lFR$
has to exclude the identity relation:  $G.\lFR \defeq (G.\lRF^{-1}\seq G.\lMO) \setminus \set{\tup{e,e} \st e\in \sE}$.}

\begin{itemize}
\item $G.\lFR\seq G.\lMO$ is irreflexive.\ \labelAxiom{atomicity}{ax:at}
\end{itemize}
This condition requires that the oncoming $G.\lRF$-edge to each RMW event $e$ 
originates from the \emph{immediate} $G.\lMO$-predecessor of $e$.

Due to the~\ref{ax:at}, supporting RMWs in robustness analysis requires specific adjustments. 
Indeed, without RMWs, if the next access by thread $\tid$ is a write to location $\loc$, 
and some write $w$ to $\loc$ has no $G.\lMO^? \seq G.\lHB^?$ path to thread $\tid$, 
we can always place the next write as the immediate predecessor of $w$ in $G.\lMO$. 
This would indicate a robustness violation 
when $w$ has $G.\lHBSC$ to thread $\tid$, and thus no SC-consistent execution allows this placement.
However, with RMWs,~\ref{ax:at} forbids the next write from being placed immediately 
before $w$ in $G.\lMO$ when $w$ itself is an RMW\@.

To handle RMWs, our approach extends the LC instrumentation by incorporating an additional timestamp at each location, 
tracking the cumulative count of \emph{non-RMW writes} performed for the location.
Each thread and location maintains LCs containing these timestamps as well, 
reflecting their awareness of this timestamp in $\lHB$ and $\lHBSC$.
Upon each write (or RMW), we use these LCs to detect potential robustness violations
(whereas upon reads we still use the standard LCs).
Thus, a violation is flagged only if a write can indeed be added to the current graph in a way that 
preserves weak memory consistency, including~\ref{ax:at}, yet disrupts SC-consistency.

\begin{figure}
\begin{minipage}{0.3\textwidth}
\begin{center}
(1)\quad$
\inarrII{\faddInst{\creg1}{\cloc1}{1} \\ \assignInst{\creg2}{\cloc2}}
{\assignInst{\cloc2}{1} \\ \assignInst{\creg3}{\cloc1}}
$
\end{center}

\vspace*{50pt}

\begin{center}
(2)\quad$
\inarrII{\faddInst{\creg1}{\cloc1}{1} \\ \assignInst{\creg2}{\cloc2}}
{\assignInst{\cloc2}{1} \\ \assignInst{\cloc1}{3}}
$
\end{center}
\end{minipage}\hfill
\begin{minipage}{0.35\textwidth}
\begin{center}
$$G_1=\vcenter{\hbox{\begin{tikzpicture}[yscale=0.8,xscale=1.05]
  \node (0x)  at (-0.7,2) {$\evlab{\lW}{}{\cloc1}{0}$};
  \node (0y)  at (0.7,2) {$\evlab{\lW}{}{\cloc2}{0}$};
  \node (11)  at (-1,1) {$\evlab{\lU}{}{\cloc1}{0,1}$ };
  \node (12)  at (-1,0) {$\evlab{\lR}{}{\cloc2}{0}$ };
  \node (21)  at (1,1) {$\evlab{\lW}{}{\cloc2}{1}$ };
    \node (22)  at (1,0) {$\evlab{\lR}{}{\cloc1}{0}$ };
  \draw[po] (11) edge (12);
  \draw[po] (0x) edge (11) edge (21);
  \draw[po] (0y) edge (11) edge (21);
    \draw[mo,bend right=15] (0x) edge (11);
    \draw[rf,bend right=45] (0x) edge (11);
      \draw[rf,bend left=10] (0y) edge (12);
  \draw[mo,bend left=15] (0y) edge (21);  
    \draw[po] (21) edge (22);
      \draw[rf,bend right=10] (0x) edge (22);
\end{tikzpicture}}}$$
\[G_2=\vcenter{\hbox{\begin{tikzpicture}[yscale=0.8,xscale=1.05]
  \node (0x)  at (-0.7,2) {$\evlab{\lW}{}{\cloc1}{0}$};
  \node (0y)  at (0.7,2) {$\evlab{\lW}{}{\cloc2}{0}$};
  \node (11)  at (-1,1) {$\evlab{\lU}{}{\cloc1}{0,1}$ };
  \node (12)  at (-1,0) {$\evlab{\lR}{}{\cloc2}{0}$ };
  \node (21)  at (1,1) {$\evlab{\lW}{}{\cloc2}{1}$ };
      \node (22)  at (1,0) {$\evlab{\lW}{}{\cloc1}{3}$ };
  \draw[po] (11) edge (12);
  \draw[po] (0x) edge (11) edge (21);
  \draw[po] (0y) edge (11) edge (21);
    \draw[mo,bend right=15] (0x) edge (11);
    \draw[rf,bend right=45] (0x) edge (11);
      \draw[rf,bend left=10] (0y) edge (12);
  \draw[mo,bend left=15] (0y) edge (21);
        \draw[mo,bend right=10] (0x) edge (22);
                \draw[mo] (22) edge (11);
\end{tikzpicture}}}\]
\end{center}
\end{minipage}\hfill\vrule\hfill
\begin{minipage}{0.25\textwidth}
\begin{center}
\[G=\vcenter{\hbox{\begin{tikzpicture}[yscale=1,xscale=0.8]
  \node (0x)  at (-0.7,2) {$\evlab{\lW}{}{\cloc1}{0}$};
  \node (0y)  at (0.7,2) {$\evlab{\lW}{}{\cloc2}{0}$};
  \node (11)  at (-1,1) {$\evlab{\lU}{}{\cloc1}{0,1}$ };
  \node (12)  at (-1,0) {$\evlab{\lR}{}{\cloc2}{0}$ };
  \node (21)  at (1,1) {$\evlab{\lW}{}{\cloc2}{1}$ };
  \draw[po] (11) edge (12);
  \draw[po] (0x) edge (11) edge (21);
  \draw[po] (0y) edge (11) edge (21);
    \draw[mo,bend right=15] (0x) edge (11);
    \draw[rf,bend right=45] (0x) edge (11);
      \draw[rf,bend left=10] (0y) edge (12);
  \draw[mo,bend left=15] (0y) edge (21);
\end{tikzpicture}}}\]
\end{center}
\end{minipage}
\caption{Examples of programs with RMWs, see \cref{ex:rmw}}%
 \label{fig:rmw}
\end{figure}
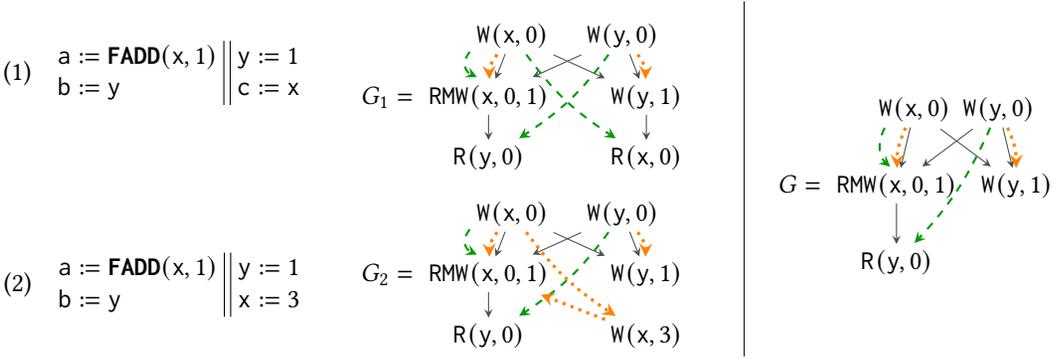

\begin{example}%
\label{ex:rmw}
Consider the tests
and corresponding candidate execution graphs in \cref{fig:rmw}.
Program (1) is not robust: $G_1$ is C11-consistent but \SC-inconsistent.
However, Program (2) is robust.
In particular, due to~\ref{ax:at}, $G_2$ is C11-inconsistent, and thus $G_2$ does not serve as a counterexample to robustness.
Note, however, that if we replace the $\faddInst{\creg1}{\cloc1}{1}$ instruction in $\ctid1$ with a normal write, 
Program (2) also becomes non-robust.
In this modified case, $G_2$ with $\evlab{\lU}{}{\cloc1}{0,1}$ replaced by the label of the new write
(and the $\lRF$-incoming edge toward this node removed) would become a counterexample to robustness.

The graph $G$ on the right serves as a witness of non-robustness for Program (1). 
If we execute $\ctid1$ first and then $\ctid2$, we reach a scenario where $\ctid1$ has the timestamp of the RMW 
in its $\vcT_\SC$ LC but not in its $\vcT_\HB$ LC\@.
Therefore, to maintain soundness and avoid reporting a spurious violation for Program (2), 
we adjust the condition being checked: 
when $\ctid2$ is about to perform a write, 
we compare the RMW LCs $\vcTU_\SC$ to $\vcTU_\HB$, rather than $\vcT_\SC$ to $\vcT_\HB$.
The modified LCs do not count RMWs in their timestamps and, 
for the run discussed, would carry only the initial write in both clocks.
\end{example}

\subsection{Supporting \RLX}%
\label{sec:rc20}

Our approach extends to a substantial fragment of the C11 model that includes, 
in addition to the release/acquire accesses discussed so far, 
non-atomic accesses, atomic relaxed accesses, and release/acquire fences. 
Their semantics follows \RLX as defined in~\cite{popl21}. 
In particular, the ``out-of-thin-air'' problem is conservatively avoided by disallowing $G.\lPO \cup G.\lRF$ cycles
like in RC11~\cite{scfix}.
\SC-fences are modeled using release/acquire fences and RMWs to a dedicated, otherwise unused location.
\SC accesses are not included in this model.

The inclusion of relaxed accesses and release/acquire fences complicates synchronization. 
Instead of incorporating $\lRF$ directly into $\lHB$, 
\RLX employs a derived relation, $\lSW$ (\emph{synchronized with}), 
which includes selected $\lRF$ edges along with other synchronization patterns 
via release/acquire fences and ``release sequence'' 
formed by $\lRF$ edges linked through intermediary RMWs. 

To support this extension, we refine the tracking of the $\lHB$ relation. 
Rather than the single $\vcT_\HB$ LC, we now maintain three 
LCs---$\vTtp{\relv}$, $\vTtp{\curv}$, and $\vTtp{\acqv}$---that 
capture the ``release'', ``current'', and ``acquire'' views of each thread, 
with similar adaptations for the RMW LCs discussed in \cref{sec:rmws}.
(Notably, the \SC LCs that track $\lHBSC$ remain unchanged.) 
This approach is inspired by the operational modeling of $\lHB$ in the Promising Semantics~\cite{promise}. 

Non-atomics also require attention.
Since data races on non-atomics are considered program errors, 
and since we assume the robustness verification is always complemented by dynamic race detection, 
the robustness analysis simply ignores non-atomic accesses.
But, to faithfully detect data races on non-atomics, 
we do need to properly track \RLX's happens-before relation,
$\lHB$, with the different synchronization patterns mentioned above.
Indeed, a data race on non-atomics means that the conflicting accesses are not related by $\lHB$,
and 
while robustness of the atomic accesses in the program 
allows us to ignore weak memory behaviors for atomics,
the whether or not two conflicting non-atomic accesses are synchronized by $\lHB$
does depend on the way atomics and fences are used (even in \SC-consistent graphs).
For that matter, we devised a race detection algorithm (given in 
\iflong
\cref{app:race}),
\else
\cite[§B]{appendix}),
\fi
based on standard vector clocks, which, in particular, fully supports release/acquire fences.
We note that robustness of atomics is essential for the completeness of this algorithm; 
otherwise, it is affected by the ``observer effect'' (see \cref{ex:obs}).

\subsection{Avoiding Benign Robustness Violations}%
\label{sec:wait}

In some situations, achieving SC is overly costly, 
and certain robustness violations can be tolerated. 
A common example involves a thread in a busy-wait loop, 
monitoring a specific value---such as a flag set by another thread---before proceeding. 
Under SC, the loop would terminate when the required value becomes visible in memory. 
Under weakly consistent memory, 
the thread may observe a stale value and continue to spin in the loop until the update propagates. 
A similar scenario is a CAS loop, 
where a thread repeatedly attempts to atomically update a variable until it succeeds. 
In weakly consistent models, the CAS may fail and retry even when it would succeed under SC\@.
Such robustness violations are typically considered benign as they do not compromise safety.

To support these benign robustness violations, 
which were also handled by previous static analysis~\cite{pldi19}, 
we require additional annotations in the input program. 
(Static analysis could automate this translation, but it is beyond our current scope.)
Busy-wait loops should be explicitly marked. 
We use two language constructs for this purpose: 
a $\waitInst{\loc}{\val}$ instruction, which blocks execution until it can read $\val$ from $\loc$, 
and a $\bcasInstn(\loc,\val_\lR,\val_\lW)$ instruction, 
which blocks until a CAS operation on $\loc$ succeeds in changing $\val_\lR$ to $\val_\lW$.
These instructions are designed as blocking, 
ensuring that execution graphs generated from programs using these constructs 
do not include events for failed attempts. 
This approach allows us to retain the robustness definition as it is: 
every \RLX-consistent execution graph of the program must also be \SC-consistent.

From the annotated code, 
we compile a finite set of ``critical location-value pairs'',
$\Crit \suq \Loc \times \Val$,
and track additional information associated for each pair.
Specifically, for each $\tup{\loc,\val} \in \Crit$ and thread $\tid \in \Tid$, 
we maintain a timestamp $\vcTV(\tid)(\loc,\val)$ that 
records the largest timestamp of a write $w_\loc^\val$ of $\val$ to $\loc$ 
that occurs $\lMO$-before a write $w$ that has $\lHBSC$ to thread $\tid$.
When thread $\tid$ is about to execute $\waitInst{\loc}{\val}$, 
we compare $\vcT_\HB(\tid)(\loc)$ to $\vcTV(\tid)(\loc,\val)$,
and flag a violation if $\vcT_\HB(\tid)(\loc) \leq \vcTV(\tid)(\loc,\val)$. 
This check ensures that not only $\vcT_\HB(\tid)(\loc) < \vcT_\SC(\tid)(\loc)$---permitting 
the thread to read from a stale write under $\RLX$ as before---but 
also that the stale write indeed holds the required value $\val$.
Additionally, we maintain timestamps $\vcWV(\loca)(\loc,\val)$ and $\vcMV(\loca)(\loc,\val)$ 
to be able to update $\vcTV(\tid)(\loc,\val)$ consistently.

\bigskip
\begin{example}
Consider the following~\ref{prog:bar} program implementing a simple barrier:

\smallskip
\noindent
\begin{minipage}{0.5\textwidth}
\begin{equation}
\tag{BAR}\label{prog:bar}
\inarrII{\assignInst{\cloc1}{1} \\ \while{\cloc2 \neq 1}{\skipc}}
{\assignInst{\cloc2}{1} \\ \while{\cloc1 \neq 1}{\skipc}}
\end{equation}
\end{minipage}\hfill
\begin{minipage}{0.45\textwidth}
\[G=\vcenter{\hbox{\begin{tikzpicture}[yscale=0.8,xscale=0.8]
  \node (0x)  at (-0.7,2) {$\evlab{\lW}{}{\cloc1}{0}$};
  \node (0y)  at (0.7,2) {$\evlab{\lW}{}{\cloc2}{0}$};
  \node (11)  at (-1,1) {$\evlab{\lW}{}{\cloc1}{1}$ };
  \node (12)  at (-1,0) {$\evlab{\lR}{}{\cloc2}{0}$ };
  \node (21)  at (1,1) {$\evlab{\lW}{}{\cloc2}{1}$ };
  \draw[po] (11) edge (12);
  \draw[po] (0x) edge (11) edge (21);
  \draw[po] (0y) edge (11) edge (21);
    \draw[mo,bend right=15] (0x) edge (11);
      \draw[rf,bend left=10] (0y) edge (12);
  \draw[mo,bend left=15] (0y) edge (21);
\end{tikzpicture}}}\]
\end{minipage}

\noindent
Program~\ref{prog:bar} is not robust. 
Under weak memory semantics, 
both while loops might repeatedly read $0$ for a while, 
whereas under SC, at least one of them must read $1$ in its very first iteration. 
However, this difference does not impact the barrier's safety, 
as each thread can only proceed after the other has raised its flag.
The execution graph $G$ on the right serves as a non-robustness witness for~\ref{prog:bar}. 
In a run where $\ctid1$ executes a write and a read before $\ctid2$ starts, 
a robustness violation is detected just before $\ctid2$ reads $\cloc1$, as we observe:
$0 = \vcT_\HB(\ctid2)(\cloc1) < \vcT_\SC(\ctid2)(\cloc1) = 1$.

To mask this robustness violation, 
we expect the user to provide Program~\ref{prog:barw11} below, 
which explicitly captures the intended behavior of the busy loops. 

\noindent
\begin{minipage}{0.32\textwidth}
\begin{equation}
\tag{BARW\ensuremath{_{1,1}}}\label{prog:barw11}
\inarrII{\assignInst{\cloc1}{1} \\ \waitInst{\cloc2}{1} }
{\assignInst{\cloc2}{1}\\\waitInst{\cloc1}{1} }
\end{equation}
\end{minipage}\hfill
\begin{minipage}{0.32\textwidth}
\begin{equation}
\tag{BARW\ensuremath{_{0,2}}}\label{prog:barw02}
\inarrII{\assignInst{\cloc1}{1} \\ \waitInst{\cloc2}{0} }
{\assignInst{\cloc2}{1}\\\waitInst{\cloc1}{2} }
\end{equation}
\end{minipage}\hfill
\begin{minipage}{0.32\textwidth}
\begin{equation}
\tag{BARW\ensuremath{_{0,0}}}\label{prog:barw00}
\inarrII{\assignInst{\cloc1}{1} \\ \waitInst{\cloc2}{0} }
{\assignInst{\cloc2}{1}\\\waitInst{\cloc1}{0} }
\end{equation}
\end{minipage}

\smallskip
\noindent
The graph $G$ above is not a non-robustness witness for~\ref{prog:barw11}, 
as it is not even a candidate execution graph of this program.
However,~\ref{prog:barw11} does not fully illustrate the issue. 
For example, $G$ is still an execution graph of the \emph{robust} program~\ref{prog:barw02} above. 
In~\ref{prog:barw02}, under the described run, we observe:
$0 = \vcT_\HB(\ctid2)(\cloc1) < \vcT_\SC(\ctid2)(\cloc1) = 1$.
The specialized tracking of the pair $\tup{\cloc1,2}$ successfully prevents 
the reporting of such spurious violations: 
here, we have $\vcTV(\ctid2)(\cloc1,2) = -1$ 
(the initial timestamp we use when no write of $2$ to $\cloc1$ exists), 
and $\vcT_\HB(\ctid2)(\cloc1) > \vcTV(\ctid2)(\cloc1,2)$.

We also note that a naive approach that checks for robustness violations 
only when $\waitInstn$ can succeed is flawed. 
To illustrate, consider the program~\ref{prog:barw00} above. 
It is not robust because only weak memory semantics allow \emph{both} $\waitInstn$ to proceed. 
Since the dynamic robustness analysis runs the instrumented program under \SC, 
we would never encounter a scenario where both $\waitInstn$ instructions succeed, 
thus this naive solution would miss the violation entirely.
\end{example}

\section{Implementation}%
\label{sec:impl}

We have implemented our approach in an open-source tool called
\toolname (\toolnamefull). In this section, we discuss some design
choices and optimizations in \toolname's implementation.

A dynamic analysis is performed either online (``on-the-fly''), by instrumenting
the program for monitoring its executions, or offline (``post-mortem'') by first
obtaining a program trace, and then analyzing it.  The online approach
is followed by tools like \tsan,
while the offline approach is typically
followed by experimental predictive race detectors~\cite{Kini17,lics0001P020,Pavlogiannis20}.
We opted for an online analysis in order to quickly and accurately
provide users with debugging information (\eg source-code-level names, stack trace).
Although it is possible to store all relevant debugging information for
post-mortem analysis, we found that doing so makes parsing and analyzing the trace much slower.

We implemented \toolname on top of the \tsan framework~\cite{www:tsan}.
\tsan is a dynamic race detector that ships by default with clang, and can handle C/C++ code.
Building on top of \tsan's infrastructure allows us to scale to large, real-world
programs that use various features of C/C++, including,
\eg~thread local storage, pointers, and dynamic thread creation.

To be able to dynamically monitor a program, \tsan (and, by
extension, \toolname) uses a non-invasive instrumentation, meaning that user
programs are compiled with a special flag, and there is no need to include
custom headers or link against heavyweight libraries.
Upon compilation of a program to LLVM-IR, a compiler pass transforms
all memory accesses to special function calls, which are used
to maintain the required instrumentation.
\toolname only tracks atomic accesses and fences.
For value-tracking (see \cref{sec:wait}), we require users to
call \toolname-specific primitives corresponding to the
$\waitInstn$, $\bcasInstn$, and strong $\casInstn$ instructions.
A future static analysis could automate the insertion of these calls in many cases.

\toolname extends the existing infrastructure of \tsan to maintain the
LCs described in \cref{sec:algorithm} and \cref{sec:extensions}
(see 
\iflong
\cref{app:rlx} 
\else
\cite[§A]{appendix}
\fi
for the full algorithm).
The LC maintenance also ensures \SC semantics,
since every two accesses to the same location are synchronized via locks.
This enables us to formally apply \cref{thm:main}.
To detect data races on non-atomics, we also modify \tsan
to execute our race-detection algorithm discussed in \cref{sec:rc20}
(given in 
\iflong
\cref{app:race}).
\else
\cite[§B]{appendix}).
\fi

To obtain a scalable implementation, our optimization efforts focused on
efficiently implementing LC operations, as real-world programs comprise
thousands of locations, and LCs (indexed by locations) are more difficult to
handle than vector clocks (indexed by thread identifiers).
We implement LCs as sorted
arrays of pairs.  When looking for a
specific address in an LC, we perform a brute-force search if the LC
size is small (this yields better cache locality and branch
prediction), and fall back to a binary search for larger LCs.
To efficiently implement LC merge, we avoid allocating a new LC for
the result. Instead, we first update the common addresses of the two
LCs in the LHS using a linear pass, we then add the extra RHS
addresses to the LHS, and finally re-sort the array.
We leave for future work further optimizations in this aspect.

Finally, to enhance user friendliness, we ensured \toolname error reports
can be used to establish robustness.
Whenever a robustness violation is found, \toolname reports the
offending events to the user. Concretely, an \toolname report includes the
write event whose timestamp does not satisfy the required condition
for robustness, and the concurrent memory access that triggers this
check. To assist locating the source of the violation, \toolname also
provides the corresponding stack traces and source-level instructions.

\section{Evaluation}%
\label{sec:eval}

In \toolname's evaluation,
we aim to demonstrate the following points:
\begin{itemize}
\item[\cref{sec:eval-static}] \toolname finds almost all robustness
  violations found by exhaustive (static) techniques, but faster and
  using less memory;
\item[\cref{sec:eval-overhead}] \toolname incurs an acceptable overhead over \tsan; and
\item[\cref{sec:eval-case-studies}] \toolname can detect and repair
  robustness violations in real-world codebases, often comprising
  thousands of lines of code, much beyond the state-of-the-art
  in robustness checking.
\end{itemize}

To do so, we ran \toolname on a representative set of both synthetic
and real-world benchmarks.
As our evaluation demonstrates, \toolname is an excellent alternative
to instrumentation-heavy or exhaustive techniques, as
\begin{enumerate*}[label=(\alph*)]
\item it consumes much less memory and scales to many more threads
  than exhaustive techniques,
\item it detects robustness violations in large codebases within seconds, and
\item requires minimal expertise to be run and repair such violations.
\end{enumerate*}

\paragraph{Experimental Setup}
All tests were done on a
Intel®Xeon® Gold 6132 CPU @2.60GHz GNU/Linux machine with 187GB of RAM\@.
We used a timeout of 60 minutes, and a memory limit of 2GB\@.
Memory consumption is reported in MB and time in seconds.
The values are averaged across 10 runs.

\subsection{Comparison with Exhaustive Techniques}%
\label{sec:eval-static}

We begin by comparing \toolname with \rocker, a static robustness
verifier from~\cite{popl21} following the approach described in \cref{sec:robustness_verification}.
\rocker accepts as input programs written in a toy language called TPL,
which are then instrumented, translated to Promela, and verified by the SPIN model checker~\cite{spin}.
Crucially, the input for \rocker has to be a finite-state program, so exhaustive search converges.

For this part of our evaluation, we restrict ourselves to synthetic
benchmarks since
\begin{enumerate*}[label=(\alph*)]
\item translating C/C++ to TPL is arduous, and
\item as our results below show, \rocker cannot handle larger programs
  anyway.
\end{enumerate*}
We use synthetic versions of well-known synchronization
algorithms obtained from previous work~\cite{popl21},
including Dekker's mutual exclusion algorithm~\cite{EWD123} 
and Peterson's algorithm~\cite{peterson}.

\begin{figure}{%
\newcommand{\timePlot}[1]{%
\begin{tikzpicture}[scale=0.6]
\begin{axis}[
    title style={at={(0.5,1)},anchor=north,yshift=-5,gray},title={\huge #1/time},
    xmin=1.5,xmax=9.5,
    ymin=0,ymax=250,
    scale only axis,
    ylabel shift=-12pt,
    xmin=2,xmax=8,
    legend pos=north west,
]

\addplot[blue, ultra thick,dotted, discard if not={name}{#1},
error bars/.cd, y dir = both, y explicit]
table[x=threads, y=rocker-time-mean,
  y error = rocker-time-std
    ] {csvRsan/NotRobustN.csv};
\addplot[red, ultra thick,dotted, discard if not={name}{#1},
error bars/.cd, y dir = both, y explicit]
table[x=threads, y=rsanTimeToErr-mean,
  y error = rsanTimeToErr-std,
    ] {csvRsan/NotRobustN.csv};
\end{axis}
\end{tikzpicture}}
\newcommand{\ramPlot}[1]{%
\hspace*{-4pt}%
\begin{tikzpicture}[scale=0.6]
\begin{axis}[
    title style={at={(0.5,1)},anchor=north,yshift=-5,gray},title={\huge #1/mem},
    scale only axis,
    xmin=2,xmax=8,
    ymin=0,ymax=2500,
    legend pos=north west,
]

\addplot[blue, ultra thick,dotted, discard if not={name}{#1}, error bars/.cd, y dir = both, y explicit]
table[x=threads, y=rocker-ram-mean,
  y error = rocker-ram-std,
    ] {csvRsan/NotRobustN.csv};
\addplot[red, ultra thick,dotted, discard if not={name}{#1}, error bars/.cd, y dir = both, y explicit]
table[x=threads, y=rsanSingleExecution-ram-mean,
  y error = rsanSingleExecution-ram-std,
    ] {csvRsan/NotRobustN.csv};
\end{axis}
\end{tikzpicture}}
\newcommand{\plotnonrobust}[1]{%
\scalebox{0.55}{\begin{minipage}{170pt}{
\timePlot{#1}\\\ramPlot{#1}
}\end{minipage}}
}
\plotnonrobust{dekker}
\plotnonrobust{peterson}
\plotnonrobust{eventCounters}
\plotnonrobust{singleton}

}
\caption{Time (s) and memory consumption (MB) for {\color{blue}{\rocker}} and {\color{red}{\toolname}} per
  number of threads}
\label{fig:eval-static:plot}
\end{figure}

We have compared \toolname and \rocker on both robust and non-robust
benchmarks.
First, let us consider the non-robust benchmarks (\cref{fig:eval-static:plot}).
For this comparison, we report the average time/memory that
\toolname and \rocker need to detect a robustness violation, respectively.
As can be seen, \toolname is able to find robustness violations much more quickly and
with much less memory than \rocker. Indeed, while \rocker often needs
minutes and gigabytes of memory the report a robustness violation,
\toolname typically reports violations within seconds and with constant memory consumption.
The number of runs required to find a robustness violation in these tests
ranges between a single run (\bmark{dekker} and \bmark{peterson})
to 30 (\bmark{eventCounters} for 8 threads). 

In addition to the non-robust benchmarks in \cref{fig:eval-static:plot}, we also
tested non-robust versions of Lamport's fast mutex algorithm~\cite{lamport-fast-mutex}
and of sequence lock~\cite{seqlock}, which 
starkly highlight the differences between the two tools.
On the one extreme, in the \bmark{lamport} benchmark, \toolname is able to immediately
find a robustness violation, while \rocker runs out of memory even for just three threads.
On the other extreme, however, in the \bmark{seqlock} benchmark, \toolname is unable to
find the robustness violation within the allocated time budget.
Indeed, in this benchmark the violation requires an intricate scheduling to
manifest, but the operating system does not context-switch in the appropriate
places. Although employing a more sophisticated scheduling algorithm in
\toolname would help in such cases, we leave such an implementation for
future work.

Overall---and despite the synthetic/non-representative nature of
the tests---we observe that \toolname scales to more threads, and
consumes much less memory than \rocker.
More importantly, we also observe that \toolname finds
robustness violations quickly, an observation also confirmed by our
realistic tests (see \cref{sec:eval-case-studies}), and which is
useful for determining whether a given program is robust.

Let us now consider some robust programs.
We constructed robust versions of the benchmarks in \cref{fig:eval-static:plot},
and measure how much memory the two tools used, as well as how many executions
\toolname managed to explore until \rocker established robustness.
The results followed trends similar to the non-robust case.
\toolname used a constant amount of RAM of ${\sim}18.5$MB while \rocker required from ${\sim}90$MB (for
small tests) up to ${\sim}2$GB, at which point it exceeded our memory limit.
In terms of time, \toolname needed ${\sim}0.03$s to execute one run of any benchmark,
which means it was able to run up to thousands of runs for most benchmark in the time
\rocker required to establish robustness (this varied from ${\sim}1$s to ${\sim}5$m).

We note that all benchmarks we used only access a few
memory locations. Since \rocker cannot handle even small synthetic
tests, it is unlikely to scale to realistic benchmarks with many
locations.

\subsection{Real-World Case Studies}%
\label{sec:eval-case-studies}

We used \toolname to establish robustness
of large, non-trivial concurrent programs.
Whenever \toolname found a violation, we added an SC fence
around the offending instructions or strengthened their access modes, 
and reran \toolname until no new violations
were reported within $100$ iterations.
Since \toolname typically found robustness violations after 1--2 iterations,
we considered such a limit to be appropriate.
In all tests, we were able to ran \toolname without modifying the
underlying code.

\paragraph{Chase-Lev Deque}

Chase-Lev deque~\cite{chase2005dynamic} is a popular wait-free work-stealing queue.
\citet{Le13} carefully studied its implementation under weak memory
(and it was also studied in the context of a Rust library\footnote{\url{https://github.com/jeehoonkang/crossbeam-rfcs/blob/deque-proof/text/2018-01-07-deque-proof.md} (accessed November 2024)}),
while \citet{popl21} proved an idealized implementation robust, by adding a few more
fences and strengthening some access modes.

For our first case study, we checked whether
\toolname could prove robustness of an implementation that cannot be verified
with state-of-the-art tools like \rocker.
To that end, we obtained a popular Chase-Lev implementation\footnote{\url{https://github.com/ConorWilliams/ConcurrentDeque/}
(accessed November 2024)} (143 stars on Github), and checked its robustness.
All in all, \toolname discovered 15 robustness violations,
which we fixed by inserting 11 fences and promoting 4 relaxed accesses to release/acquire.
Each violation was discovered with about a single iteration on
average, confirming our observation in \cref{sec:eval-static} that
\toolname identifies robustness violations quickly.
Interestingly, with this approach the obtained (minimal) set
of \SC fences to be added to
the code to establish robustness is different from prior work . %
While not as efficient, \toolname's fence set was determined with minimal understanding of the code.
Fixing the robustness violations caused a 1.13x slowdown
in \toolname compared to the initial non-robust case, which we consider fair
considering the number of fences inserted and the overhead they incur for maintaining the instrumentation.
Finally, we compare the cost of establishing robustness using \toolname to what one would get when naively strengthening
\emph{all} atomic accesses to \SC, which on certain architectures entail a full memory barrier.
For the case of Chase-Lev, the latter would lead to 28\% more barriers in the compiled code.

\paragraph{VSync Library}
\textsc{VSync}\footnote{\url{https://github.com/open-s4c/libvsync/}
  (accessed November 2024)}
is a concurrent library comprising
many different locking and concurrent-data-structure implementations, which were
verified under weak memory consistency using bounded model
checking~\cite{Oberhauser21}.
In this case study, we wanted to see whether these data structures are
also robust and, if not, what is the cost of making them robust.

We took all of \textsc{VSync}'s concurrent data
structures and checked their robustness.
\toolname discovered 0.75 robustness violations per
test on average, and each violation manifested within about 1.2
iterations. Fixing the violations required
38 extra SC fences in 15 files (data structures and their dependencies), which, depending on the application, may be
acceptable given the resulting simplicity in reasoning.
Similarly to our first case study above, fixing the robustness
violations incurred a 1.3x slowdown in \toolname compared
to the non-robust case.
In this case, however, establishing robustness by promoting all atomic
accesses to \SC would lead to 1,400\% more fences on average (26\% in the
best case).

\paragraph{Mimalloc Allocator}
\texttt{mimalloc}\footnote{\url{https://github.com/microsoft/mimalloc}
  (accessed November 2024)} is a memory allocator by
Microsoft that heavily uses weak memory atomics to achieve
good performance.
  \texttt{mimalloc} has over 10.6k stars on Github
and comprises about 12KLoc lines of C code, rendering it our largest case
study.
We used \texttt{mimalloc}'s standard stress testing infrastructure for
our client code, which uses 25 threads with a 25\%
load-per-thread workload, and creates over 60k distinct atomic locations,
leading to very large LCs, and thus increased monitoring cost.

\toolname identified 20 distinct robustness violations which we repaired
one by one. Even though each robustness violation was identified using
a single iteration, the time these iterations required varied.
The reason for the varying time is the fact that we aborted when
encountering robustness violations (thereby not running the whole
program), as well as the overhead induced by the added fences.
Achieving robustness by promoting all accesses to \SC would lead to
300\% more fences.

\subsection{Overhead over \tsan}%
\label{sec:eval-overhead}

\newcommand{\TSANmaxoverhead}{222.08}
\newcommand{\TSANnumtests}{154\xspace}
\newcommand{\TSANgeomeanoverhead}{1.52}%

Finally, we measure the overhead that \toolname induces over \tsan
(using \tsan as is, without its adaptions mentioned in \cref{sec:impl}),
which reports 2-20x overhead in execution time
on top of non-instrumented execution for typical programs~\cite{www:tsan}.
To do so, we ran \toolname and \tsan on the tests of
\cref{sec:eval-static} and \cref{sec:eval-case-studies}, as well as on a large
number of automatically generated benchmarks containing a varying
number of (randomly generated) memory accesses (\TSANnumtests tests in total).
Although the latter set is not representative of real-world workloads, it
was designed to bring \toolname to its limits, so that we can measure its
overhead over \tsan in a ``worst-case'' scenario.

\smallskip
\begin{wrapfigure}[9]{r}{.3\textwidth}
\pgfplotsset{compat=newest} %
 \vspace*{-4ex}
{\footnotesize{\centering \textbf{Overhead:}\\}
\begin{tabular}{@{\hsep}c@{}}
geo-mean: \TSANgeomeanoverhead,  %
max:  \TSANmaxoverhead   \\
~
\end{tabular}}
\\[-2ex]\begin{tabular}{@{}l@{}}
\begin{tikzpicture}[scale=0.4]
\begin{loglogaxis}[
    scale only axis,
    xlabel={\large\tsan time (s)},
    ylabel={\large\toolname time (s)},
    ylabel shift=-12pt,
    xmin=0.01,xmax=100,
    ymin=0.01,ymax=100,
    legend pos=north west,
]

\addplot[ultra thin,domain=0.01:100]
    { x };
\addplot[blue, only marks, mark size=.7pt, error bars/.cd, y dir = both, y explicit, x dir = both, x explicit]
    table[x=tsan-avg, y=rsan-avg%
    ] {plot/nonrobust.dat};
\addplot[blue, only marks, mark size=.7pt, error bars/.cd, y dir = both, y explicit, x dir = both, x explicit]
    table[x=tsan-avg, y=rsan-avg%
    ] {plot/robust.dat};

\end{loglogaxis}
\end{tikzpicture}
\end{tabular} %
\label{fig:eval-overhead}
\end{wrapfigure}
Our results are summarized on the right; on average
\toolname incurs a \TSANgeomeanoverhead{x} overhead on top of \tsan.
This overhead mostly occurs in tests with large thread sizes,
and is attributed to the number of memory locations used
(more locations imply larger location clocks), and the number of \SC
fences present. Indeed, while \tsan simply executes fences
with no instrumentation,
\toolname has to instrument \SC
fences, and following the \RLX memory model,
it models them as RMWs to an otherwise unused location.
Thus, \SC fences in \toolname
contend for the same lock, which introduces
a performance penalty.

In certain extreme cases, this penalty might be substantial:
\eg for \texttt{mimalloc} (see \cref{sec:eval-case-studies}),
we observe that \toolname is \num[round-mode=places,round-precision=0]{\TSANmaxoverhead}~times slower than \tsan.
This is because \texttt{mimalloc} has a lot of fences (we used them to make
it robust), large threads, and a lot of memory locations:
it is a memory allocator, and its stress tests allocate a
lot of memory, thereby making \toolname's LCs explode.  Given that
\begin{enumerate*}[label=(\alph*)]
\item such cases were uncommon in our evaluation,
\item  the ``push-button'' nature
  of \toolname, and
\item  the fact that we were able to run \toolname on
real-world programs,
\end{enumerate*}
we consider \toolname's overall overhead acceptable, and leave
improvements to our treatment of fences and LCs as future work.

\section{Related and Future Work}%
\label{sec:related}

\paragraph{Robustness Verification}
The robustness of programs against relaxed memory models
presents a compelling correctness criterion,
enabling the reduction of reasoning and verification under complex,
unintuitive models to simpler reasoning that assumes sequential consistency (SC).
Robustness with respect to various hardware and programming language memory models
has been extensively studied in prior work,
such as~\cite{Alglave:2011, Derevenetc:2014, checkfence, Alglave:2017, Burckhardt:2008, Burnim2011,
tso-robustness,  Liu:2012, tso-robustness2, PSO:15, Abdulla:2015, Bouajjani18, Linden:2011,
Linden:2013, pldi19, popl21, Beillahi:2019, Beillahi:CAV19, Nagar24}.
Some of these studies present decision procedures,
often with PSPACE complexity, similar to the complexity of verifying finite-state concurrent programs under SC\@.
Others propose practical over-approximations of robustness,
offering guidance to programmers (and tools) on where to insert memory fences or strengthen access modes.
To our knowledge, all existing research has focused on the static verification of robustness,
and our paper is the first to propose a dynamic approach.

\paragraph{Enforcing Sequential Consistency}
Since weak memory models are inherently complex, 
researchers have investigated whether their performance benefits justify the added complexity,
and explored enforcing \SC at the compiler level to simplify software development~\cite{sc-mm,Liu17,Liu21}.
Our focus on robustness stems from a similar perspective. 
However, as shown in~\cref{sec:eval}, 
ensuring robustness in specific programs requires significantly fewer fences 
than enforcing \SC through general compiler mappings.
Evaluating the performance overhead of making programs robust is left for future work.

\paragraph{Dynamic Race Detection}
Dynamic verification of concurrent programs primarily focuses on race detection,
as data races are widely regarded as one of the most frequent sources of bugs in concurrent systems.
Several tools have been developed and employed for dynamic data race detection,
including
Eraser~\cite{Savage97}, Djit$^+$~\cite{MultiRace}, Helgrind$^+$~\cite{Helgrind},
ThreadSanitizer (\tsan)~\cite{Serebryany09,www:tsan}, and FastTrack~\cite{Flanagan09}
(see~\cite{Yu17} for a comparative study).
More theoretical work has focused on developing more efficient algorithms to predict additional races
without increasing the time and memory overheads~(see, \eg~\cite{Kini17,Mathur22}).
None of these dynamic methods supports C11 atomics.
Our work extends \tsan, which has only limited best-effort support for C11
and is impacted by the observer effect, as discussed in \cref{sec:intro}.

\paragraph{\ctsan}
\citet{Lidbury17} redesigned \tsan's race detection algorithm to adapt it for C11.
Their tool, \ctsan, detects races caused by weak memory behaviors.
However, this comes at the cost of significantly more complex and heavyweight instrumentation compared to other race detectors.
Specifically, to overcome the observer effect,
\ctsan employs software store buffers to track the stores to each memory location during execution.
To allow non-SC behaviors, when a read access occurs,
\ctsan randomly selects a write from the store buffers
to provide the value for the read, subject to constraints that ensure consistency.
This approach makes \ctsan resemble random testing rather than a pure dynamic analysis.
A further development incorporates controlled randomized scheduling into \ctsan~\cite{LidburyD19}.

Completeness is compromised in \ctsan.
First, the software store buffers have to be bounded in practice.
Second, to maintain manageable complexity 
\ctsan assumes a memory model stronger than C11
that precludes certain weak behaviors obtained by reordering of consecutive writes:

\begin{example}%
\label{example:tsan11}
Consider the following test (known as 2+2W):

\noindent
\begin{minipage}{0.2\textwidth}
\[
\inarrII{\assignInst{\cloc1}{1} \\ \assignInst{\cloc2}{2} \\ \assignInst{a}{\cloc2}}
{\assignInst{\cloc2}{1} \\ \assignInst{\cloc1}{2} \\ \assignInst{b}{\cloc1}}
\]
\end{minipage}\hfill
\begin{minipage}{0.75\textwidth}
When using atomic writes with weak memory orderings (relaxed or even release),
C11 permits the outcome $a = b = 1$.
This behavior is observable on ARM multiprocessors,
which may reorder writes to different addresses.
\end{minipage}

\noindent
However, \ctsan~\cite{Lidbury17} would not detect any race that depends on this weak behavior.
The tool's algorithm assumes that the ``modification order'' must align with the execution order
(\ie acyclicity of $\lPO \cup \lRF \cup \lMO$ rather than of $\lPO \cup \lRF$ as we assume in our work)---a 
condition that excludes the weak behavior in this example.
\end{example}

\paragraph{\ctester}
A more recent tool, \ctester by \citet{Luo21}, also implements random exploration for detecting data races.
\ctester's instrumentation more accurately models the C11 memory model,
particularly by allowing cycles in $\lPO \cup \lRF \cup \lMO$,
which means it correctly handles cases like the one in \cref{example:tsan11}.
However, to achieve this, \ctester must maintain detailed information about the execution graph at each step,
resulting in significant memory usage, which grows with the number of write events executed.
In contrast, the memory consumption of our approach (which targets a different property)
is practically independent of the length of the execution.

\begin{example}%
\label{example:C11Tester}
To measure \ctester's memory consumption, we ran \ctester on~\ref{prog:mp} from \cref{fig:intro_rob}
with $10^N$ writes to a single (irrelevant)
variable in the producer thread.  While the program without
instrumentation requires $16$KB (irrespective of $N$),
\ctester consumed $26$KB for $N=1$ and $6.3$GB for $N=7$,
demonstrating that it requires memory linear in the size of the
explored execution graph.
(In contrast, \toolname requires $19$MB regardless of $N$.)
\end{example}

\paragraph{Model Checking}
Stateless model checkers,
like Nidhugg~\cite{%
Abdulla19, Abdulla24},
RCMC~\cite{rcmc}, and GenMC~\cite{Kokologiannakis19},
are carefully designed to efficiently explore all possible executions of a (loop-free) program.
While these tools can be adapted for random exploration
(with GenMC already having a preliminary version for this purpose~\cite{Kokologiannakis24}),
the memory usage per execution remains proportional to the length of the execution.

\paragraph{Testing}
Probabilistic concurrency testing~\cite{BurckhardtKMN10},
a randomized testing algorithm that offers theoretical guarantees on the probability of detecting bugs,
was applied to the C11 memory model by \citet{Gao23}.
Broadly speaking, this approach involves identifying an appropriate notion of \emph{bug depth}
and developing an execution sampling algorithm capable of detecting all bugs up to a certain depth with some probability.\ \citet{Gao23} proposed using the number of reads-from edges from one thread to another
 as the bug depth notion for weak memory programs.
However, their sampling algorithm fails to capture executions with cycles in $\lPO \cup \lRF \cup \lMO$,
as the one in \cref{example:tsan11}.
Our current approach does not directly control the scheduler,
relying instead on the OS for all scheduling decisions.
A promising direction for future work is to add such control and incorporate probabilistic concurrency testing of robustness,
specifically targeting violations that arise only under particular interleavings
(such as those in \bmark{seqlock} discussed in \cref{sec:eval-static}).

\begin{acks}
We thank the anonymous reviewers for their valuable feedback.
This work was supported by 
the European Research Council (ERC) under the European Union's Horizon 2020
research and innovation programme (grant agreement no.~851811)
and the Israel Science Foundation (grant numbers 814/22 and 2117/23).
\end{acks}

\section*{Data-Availability Statement}

The artifact is available at~\url{https://doi.org/10.5281/zenodo.15002567}.

\bibliography{main}

\iflong
\clearpage
\appendix

\newcommand{\vcGT}{{}\hat{\vcT}}
\newcommand{\vcGW}{{}\hat{\vcW}}
\newcommand{\vcGM}{{}\hat{\vcM}}
\renewcommand{\faddInst}[3]{\faddInstn({#1}, {#2},{#3})}

\section{Full Location-Clock-Based Algorithm}%
\label{app:rlx}

We present the complete LC algorithm, incorporating all extensions discussed in \cref{sec:extensions}. 
For further details on \RLX, the underlying memory model assumed in this work, we refer the reader to~\cite{popl21}. 
We also adopt various notations introduced in~\cite{popl21}.

\paragraph{Timestamp Assignments}
Given an execution graph $G$, the modification order $G.\lMO$ induces
timestamp mappings $G.\lTS, G.\lTSU: G.\sW \to \Time$ defined by:
\begin{align*}
G.\lTS(w) & \defeq \size{\set{w'\in \sW \setminus \Init \st \tup{w',w}\in G.\lMO^?}} \\
G.\lTSU(w) & \defeq \size{\set{w'\in \sW \setminus (\sU \cup \Init) \st \tup{w',w}\in G.\lMO^?}}
\end{align*}
$G.\lTS(w)$ counts the number of write events, including RMWs, 
that are ordered before $w$ in $G.\lMO$, including $w$ itself, but excluding the initialization write
(we have $G.\lTS(w) = \size{\set{w'\in \sW \st \tup{w',w}\in G.\lMO}}$).
In turn, $G.\lTSU(w)$ is similar but is only counts \emph{non-RMW} write events
(we have $G.\lTSU(w) = \size{\set{w'\in \sW \setminus \sU \st \tup{w',w}\in G.\lMO^?}} - 1$).

\paragraph{\SC Location Clocks}
$\vcT_\SC$ and $\vcTU_\SC$ assign an LC to every thread,
and $\vcW_\SC$, $\vcM_\SC$, $\vcWU_\SC$, and $\vcMU_\SC$ assign an LC to every location.
Their meaning is as follows (where $\max\emptyset\defeq0$):
\begin{align*}
	\vcT_\SC(\tid)(\loc) & =\max\set{G.\lTS(w) \st w \in \dom{[\sW_\loc] \seq G.\lHBSC^? \seq [G.\sE^\tid]}}
\\
	\vcW_\SC(\loca)(\loc) & =\max\set{G.\lTS(w) \st w \in \dom{[\sW_\loc] \seq G.\lHBSC^? \seq [G.\wmax{\loca}]}}
\\
	\vcM_\SC(\loca)(\loc) & =\max\set{G.\lTS(w) \st w \in \dom{[\sW_\loc] \seq G.\lHBSC^? \seq [G.\sE_\loca]}}
\\
      \multispan2{$\vcTU_\SC,\vcWU_\SC,\vcMU_\SC$ are similar using $G.\lTSU$ instead of $G.\lTS$\hfil}
\end{align*}

Initially, all LCs are empty (initialized to $\vc_\Init$).
At each program step, these LCs are maintained as follows.
When thread $\tid$ performs an access of type $\typ \in \set{\lR,\lW,\lU}$ to an atomic location $\loc$,
we execute the following updates (using the helper procedure on the right), 
where $\vc_1 \sqassn \vc_2$ is a shorthand for  $\vc_1 \eqassn \vc_1 \sqcup \vc_2$:

\smallskip
\begin{minipage}[t]{.45\textwidth}
\begin{algorithmic}[1]
	\If {$\typ\in\set{\lW,\lU}$}
	\State $\vcM_\SC(\loc)(\loc) \eqassn \vcW_\SC(\loc)(\loc)+1$
	\EndIf
	\State \Call{UpdateLSC}{$\vcT_\SC, \vcW_\SC, \vcM_\SC$}
	\If {$\typ=\lW$}
	\State $\vcMU_\SC(\loc)(\loc) \eqassn \vcWU_\SC(\loc)(\loc)+1$
	\EndIf
	\State \Call{UpdateLSC}{$\vcTU_\SC, \vcWU_\SC, \vcMU_\SC$}
\end{algorithmic}
\end{minipage}
\hfill
\begin{minipage}[t]{.45\textwidth}
\begin{algorithmic}[1]
	\Procedure{UpdateLSC}{$\vcGT, \vcGW, \vcGM$}
	\If {$\typ\in\set{\lW,\lU}$}
	\State $\vcGT(\tid) \sqassn \vcGM(\loc)$
	\State $\vcGM(\tid) \eqassn \vcGT(\tid)$
	\State $\vcGW(\tid) \eqassn \vcGT(\tid)$
\EndIf
\If {$\typ=\lR$}
	\State $\vcGT(\tid) \sqassn \vcGW(\loc)$
	\State $\vcGM(\tid) \sqassn \vcGT(\tid)$
\EndIf
\EndProcedure
\end{algorithmic}
\end{minipage}

\paragraph{Overwritten-Value Location Clocks}
We assume a (finite) collection of ``critical'' location-value pairs,
$\Crit \suq \Loc \times \Val$,
and let
$\lLOC(\Crit) \defeq \set{\loc \st \exists \val\ldotp \tup{\loc,\val}\in\Crit}$
and $\Crit(\loc) \defeq  \set{\val \st \tup{\loc,\val}\in\Crit}$
for every $\loc\in\lLOC(\Crit)$.
We require that if $\val$ is the expected value of a $\waitInstn$, $\bcasInstn$, or strong $\casInstn$ 
instruction for location $\loc$ in the input program, then $\tup{\loc,\val}\in \Crit$.
Then, we maintain \emph{location-value clocks} (LVC) for the locations mentioned in $\Crit$.
LVCs are defined like LCs, but they carry a timestamp for every location-value pair in $\Crit$,
as well as a ``default'' timestamp for every location mentioned in $\Crit$.
They represent functions in 
$\Crit \cup \lLOC(\Crit) \to \Time \cup\set{-1}$
($-1$, which is not needed in LCs, is used as the initial value).

We maintain LVCs similar to the LCs above:
$\vcTV$ and $\vcTUV$ assign an LVC to every thread,
and $\vcWV$, $\vcMV$, $\vcWUV$, and $\vcMUV$ assign an LVC to every location.
Their formal meaning is as follows, 
where $\sW_{\loc,\val}$ denotes the set of all write/RMW events to location $\loc$ writing value $\val$,
$\sW_{\loc,\nin \valset}$ denotes the set of all write/RMW events to location $\loc$ writing value $\val \nin \valset$,
and $\max{}_{-1} T$ returns the maximum of $T$ or $-1$ when $T=\emptyset$:
\begin{align*}
	\vcTV (\tid)(\loc,\val) & =\max{}_{-1}\set{G.\lTS(w) \st w \in \dom{[\sW_{\loc,\val}] \seq G.\lMO \seq G.\lHBSC^? \seq [G.\lE^\tid]}}
\\        
	\vcTV (\tid)(\loc) & =\max{}_{-1}\set{G.\lTS(w) \st w \in \dom{[\sW_{\loc,\nin\Crit(\loc)}] \seq G.\lMO \seq G.\lHBSC^? \seq [G.\lE^\tid]}}
\\        
	\vcWV (\loca)(\loc,\val) & =\max{}_{-1}\set{G.\lTS(w) \st w \in \dom{[\sW_{\loc,\val}] \seq G.\lMO \seq G.\lHBSC^? \seq [G.\wmax{\loca}]}}
\\        
	\vcWV (\loca)(\loc) & =\max{}_{-1}\set{G.\lTS(w) \st w \in \dom{[\sW_{\loc,\nin\Crit(\loc)}] \seq G.\lMO \seq G.\lHBSC^? \seq [G.\wmax{\loca}]}}
\\        
	\vcMV (\loca)(\loc,\val) & =\max{}_{-1}\set{G.\lTS(w) \st w \in \dom{[\sW_{\loc,\val}] \seq G.\lMO \seq G.\lHBSC^? \seq [G.\lE_\loca]}}
\\
	\vcMV (\loca)(\loc) & =\max{}_{-1}\set{G.\lTS(w) \st w \in \dom{[\sW_{\loc,\nin\Crit(\loc)}] \seq G.\lMO \seq G.\lHBSC^? \seq [G.\lE_\loca]}}
\\
      \multispan2{$\vcTUV,\vcWUV,\vcMUV$ are similar using $G.\lTSU$ instead of $G.\lTS$\hfil}
\end{align*}

Initially, all timestamps are initialized to $-1$.
At each program step, these LCs are maintained as follows.
When thread $\tid$ performs an access of type $\typ \in \set{\lR,\lW,\lU}$ to an atomic location $\loc\in\lLOC(\Crit)$,
we execute the following updates, where $\val_\prev$ is the value of $\loc$ in memory 
before the action:

\begin{multicols}{2}
\begin{algorithmic}[1]
	\If {$\typ\in\set{\lW,\lU}$}
		\If {$\tup{\loc,\val_\prev}\in\Crit$}
			\State $\vcMV(\loc)(\loc,\val_\prev) \eqassn \vcW_\SC(\loc)(\loc)$
		\Else
			\State $\vcMV(\loc)(\loc) \eqassn \vcW_\SC(\loc)(\loc)$
		\EndIf
    \EndIf
	\State \Call{UpdateLSC}{$\vcTV, \vcWV, \vcMV$}
	\If {$\typ=\lW$}
		\If {$\tup{\loc,\val_\prev}\in\Crit$}
			\State $\vcMUV(\loc)(\loc,\val_\prev) \eqassn \vcWU_\SC(\loc)(\loc)$
		\Else
			\State $\vcMUV(\loc)(\loc) \eqassn \vcWU_\SC(\loc)(\loc)$
		\EndIf
	\EndIf
	\State \Call{UpdateLSC}{$\vcTUV, \vcWUV, \vcMUV$}
\end{algorithmic}
\end{multicols}

These updates use the same helper procedure as above.
We assume that they run before the ones updating 
the \SC LCs above 
(since we need the values of $\vcW_\SC(\loc)(\loc)$ and $\vcWU_\SC(\loc)(\loc)$ \emph{before} their update).

\paragraph{Happens-Before Location Clocks}
The instrumentation for $\lHB$ consists of 
$\vTtp{\curv}$, $\vTtp{\acqv}$, $\vTtp{\relv}$,
$\vTUt{\curv}$, $\vTUt{\acqv}$, and $\vTUt{\relv}$
that assign an LC to every thread,
and $\vWt$ and $\vWU$ that assign an LC to every location.
Their semantics is as follows:
\begin{align*}
\vTtp{\curv}(\tid)(\loc) & =
\max\set{G.\lTS(w) \st w \in \dom{ [\sW_\loc] \seq G.\lRF^?\seq G.\lHB^? \seq [G.\lE^\tid]} }
\\
\vTtp{\relv}(\tid)(\loc) & =
\max\set{G.\lTS(w) \st w \in \dom{ [\sW_\loc] \seq G.\lRF^?\seq G.\lHB^? \seq [G.\sE^\tid \cap \sF^{\sqsupseteq\rel}]}}
\\
\vTtp{\acqv}(\tid)(\loc) & =
\max\set{G.\lTS(w) \st w \in \dom{[\sW_\loc] \seq G.\lRF^?\seq G.\lHB^? \seq ([\sE^{\sqsupseteq\rel}]\seq([\sF];G.\lPO)^?\seq G.\lRF^+)^?\seq [G.\sE^\tid]}}
\\
\vWt(\loca)(\loc) & =
\max\set{G.\lTS(w) \st w \in \dom{[\sW_\loc] \seq G.\lRF^?\seq G.\lHB^? \seq ([\sE^{\sqsupseteq\rel}]\seq([\sF];G.\lPO)^?\seq G.\lRF^*)^?\seq [G.\wmax{\loca}]}}
\\
      \multispan2{$\vTUt{\curv},\vTUt{\relv},\vTUt{\acqv},\vWU$ are similar using $G.\lTSU$ instead of $G.\lTS$\hfil}
\end{align*}

Initially, all LCs are empty (initialized to $\vc_\Init$).
At each program step, these LCs are maintained as follows.
When thread $\tid$ performs an access of type $\typ \in \set{\lR,\lW,\lU}$ to an atomic location $\loc$
with access mode (\aka memory ordering) $\mod \in \set{\rlx,\acq,\rel,\acqrel}$,
we execute the following updates:

\begin{multicols}{2}
\begin{algorithmic}[1]
\If {$\typ\in\set{\lW,\lU}$}
	\State $\vWt(\loc)(\loc) \eqassn \vWt(\loc)(\loc)+1$
\EndIf
\If {$\typ=\lW$}
	\State $\vWUt(\loc)(\loc) \eqassn \vWUt(\loc)(\loc)+1$
\EndIf
\State $\vTtp{\curv}(\tid)(\loc) \eqassn \vWt(\loc)(\loc)$
\State $\vTUt{\curv}(\tid)(\loc) \eqassn \vWUt(\loc)(\loc)$
\State $\vTtp{\acqv}(\tid)(\loc) \eqassn \vWt(\loc)(\loc)$
\State $\vTUt{\acqv}(\tid)(\loc) \eqassn \vWUt(\loc)(\loc)$
\If {$\typ\in\set{\lR,\lU}$} 
	\State $\vTtp{\acqv}(\tid) \sqassn \vWt(\loc)$
	\State $\vTUt{\acqv}(\tid) \sqassn \vWUt(\loc)$
 	\If  {$\mod\sqsupseteq \acq$}
	\State $\vTtp{\curv}(\tid) \sqassn \vWt(\loc)$
	\State $\vTUt{\curv}(\tid) \sqassn \vWUt(\loc)$
 	\EndIf 	
\EndIf
\If {$\typ=\lW$} 
	\If {$\mod \sqsupseteq \rel$} 
		\State $\vWt(\loc)  \eqassn \vTt{\curv}(\tid) $
		\State $\vWUt(\loc) \eqassn \vTUt{\curv}(\tid)$
	\Else
		\State $\vWt(\loc)  \eqassn \vTt{\relv}(\tid) $
		\State $\vWUt(\loc) \eqassn \vTUt{\relv}(\tid)$
 	\EndIf
\EndIf
\If {$\typ=\lU$} 
	\If {$\mod \sqsupseteq \rel$} 
		\State $\vWt(\loc)  \sqassn \vTt{\curv}(\tid) $
		\State $\vWUt(\loc) \sqassn \vTUt{\curv}(\tid)$
	\Else
		\State $\vWt(\loc)  \sqassn \vTt{\relv}(\tid) $
		\State $\vWUt(\loc) \sqassn \vTUt{\relv}(\tid)$
 	\EndIf
\EndIf
\end{algorithmic}
\end{multicols}

When $\tid$ performs a fence with mode $\mod \in \set{\acq,\rel,\acqrel}$, we execute the following updates:

\begin{multicols}{2}
\begin{algorithmic}[1]
 	\If  {$\mod \sqsupseteq \acq$}
	\State $\vTtp{\curv}(\tid) \eqassn \vTtp{\acqv}(\tid)$
	\State $\vTUt{\curv}(\tid) \eqassn \vTUt{\acqv}(\tid)$
	\EndIf
 	\If  {$\mod \sqsupseteq \rel$}
	\State $\vTtp{\relv}(\tid) \eqassn \vTtp{\curv}(\tid)$
	\State $\vTUt{\relv}(\tid) \eqassn \vTUt{\curv}(\tid)$
	\EndIf
\end{algorithmic}
\end{multicols}

Finally, $\sco$-fences are modeled as sequences of instructions 
$$\fenceInst{\acq} ; \faddInst{\floc}{0}{\acqrel} ; \fenceInst{\rel}$$
where $\floc$ is a distinguished otherwise-unused location.

\paragraph{Robustness Check}

Before every instruction of thread $\tid$ that performs an access to an atomic location $\loc$, 
we do the check below and declare a robustness violation if the specified condition holds:

\begin{itemize}
	\item Read/Weak  $\casInstn$:
		${\vTtp{\curv}}(\tid)(\loc) < {\vcT_\SC}(\tid)(\loc)$
	\item Write/Fetch-and-add:
		${\vTUt{\curv}}(\tid)(\loc) < {\vcTU_\SC}(\tid)(\loc)$
	\item $\waitInstn$ for value $\val$:
		${\vTtp{\curv}}(\tid)(\loc) \leq {\vcTV}(\tid)(\loc,\val)$
	\item $\bcasInstn$ with expected value $\val$:
		${\vTUt{\curv}}(\tid)(\loc) \leq {\vcTUV}(\tid)(\loc,\val)$
	\item Strong  $\casInstn$ with expected value $\val$:
		\begin{itemize}
			\item ${\vTUt{\curv}}(\tid)(\loc) \leq {\vcTUV}(\tid)(\loc,\val)$; or
			\item ${\vTtp{\curv}}(\tid)(\loc) \leq {\vcTV}(\tid)(\loc,\vala)$ for some $\vala\neq \val$; or
			\item ${\vTtp{\curv}}(\tid)(\loc) \leq {\vcTV}(\tid)(\loc)$
		\end{itemize}
\end{itemize}

\section{Dynamic Race Detection for Non-Atomic Accesses}%
\label{app:race}

\newcommand{\naloc}{n}
\newcommand{\vcNAR}{\mathbb{NR}}
\newcommand{\vcNAW}{\mathbb{NW}}
\newcommand{\vcTR}{\vTtp{\relv}}
\newcommand{\vcTC}{\vTtp{\curv}}
\newcommand{\vcTA}{\vTtp{\acqv}}
\newcommand{\conditional}[3]{#1 \mathbin{?} #2 \mathbin{:} #3}
\renewcommand{\vc}{\mathbb{V}}

We describe the instrumentation needed to perform race detection
for non-atomic accesses under the \RLX memory model, 
assuming the analyzed program is robust.
Robustness allows us to consider only \SC executions of the program.
The race detection instrumentation has to track the $\lHB$ relation as defined in \RLX.

Below, we assume a set $\NALoc$ of non-atomic locations
(disjoint from the set $\Loc$ of atomic locations).
We use $\naloc$ to refer to a non-atomic location.

\paragraph{Vector Clocks}
A \emph{thread epoch} is a pair $\epoch{\tid}{\ts}$
where $\tid\in\Tid$ and $\ts\in\Time$, and a
\emph{vector clock} (or VC) $\vc$ is a set of thread epochs such that
each thread $\tid$ appears in at most one thread epoch in $\vc$.
A VC $\vc$ essentially represents a function from $\Tid$ to $\Time$,
assigning the (unique) timestamp $\ts$ such that $\epoch{\tid}{\ts}\in \vc$ for every $\tid\in\Loc$
that appears in $\vc$, and $0$ for every other $\tid\in\Tid$.
We often identify VCs with the functions they represent,
writing, \eg~$\vc(\tid)$ for the timestamp $\tid$ assigns to $\loc$.
The empty VC, representing the function $\lambda \tid \ldotp 0$, is denoted by $\vc_\Init$.
We also identify a thread epoch $\epoch{\tid}{\ts}$ with a singleton VC $\set{\epoch{\tid}{\ts}}$.
We use the following standard predicate and operation on VCs:
\[\vc_1 \sqsubseteq \vc_2 \defeq \forall \tid \ldotp \vc_1(\tid) \leq \vc_2(\tid)
\qquad\qquad
\vc_1 \sqcup \vc_2 \defeq \lambda \tid \ldotp \max \set{\vc_1(\tid), \vc_2(\tid)}\]

\paragraph{Instrumentation and Race Detection}
The instrumentation consists of the following:
$\vcTR$, $\vcTC$, and $\vcTA$ assigning a VC to every thread;
$\vWt$ assigning a VC to every atomic location;
and $\vcNAW$ and $\vcNAR$ assigning a VC to every non-atomic location.
Their meaning is as follows (where $G.\lE^{\tida,{\sqsupseteq\rel}} \defeq \set{e\in G.\lE \st \lTID(e)=\tida \land
\lMOD(e) \sqsupseteq\rel}$):\footnote{The ternary conditional $\conditional{b}{p}{q}$ evaluates to $p$
if $b$ holds and to $q$ if $b$ does not hold.}
\begin{align*}
\vcTR(\tid)(\tida) & =
\size{G.\lE^{\tida,{\sqsupseteq\rel}} \cap \dom{G.\lHB^?\seq [G.\lE^\tid \cap \sF^{\sqsupseteq\rel}]} }
\\
\vcTC(\tid)(\tida) & =
\size{G.\lE^{\tida, \sqsupseteq\rel} \cap \dom{G.\lHB^?\seq [G.\lE^\tid]} } + (\conditional{\tida = \tid}{1}{0})
\\
\vcTA(\tid)(\tida) & =
\size{G.\lE^{\tida, \sqsupseteq\rel} \cap \dom{G.\lHB^?\seq [\sE^{\sqsupseteq\rel}]\seq([\sF] \seq G.\lPO)^? \seq G.\lRF^+ \seq [G.\lE^\tid \cap \sR^{\not \sqsupseteq\acq}]} }
\\
\vWt(\loc)(\tida) & =
\size{G.\lE^{\tida, \sqsupseteq\rel} \cap \dom{G.\lHB^?\seq [\sE^{\sqsupseteq\rel}]\seq([\sF] \seq G.\lPO)^? \seq G.\lRF^* \seq [G.\wmax{\loc}]} }
\\
\vcNAW(\naloc)(\tida) & =
\size{G.\lE^{\tida, \sqsupseteq\rel} \cap \dom{G.\lPO\seq[\sW_\naloc]} } + ( \conditional{G.\sW_\naloc^\tida \neq \emptyset}{1}{0})
\\
\vcNAR(\naloc)(\tida) & =
\size{G.\lE^{\tida, \sqsupseteq\rel} \cap \dom{G.\lPO\seq[\lR_\naloc]} } + ( \conditional{G.\sR_\naloc^\tida  \neq \emptyset}{1}{0})
\end{align*}

Initially, all VCs are empty (initialized to $\vc_\Init$),
except for $\vcTC(\tid)$ that is set to $1$ if $\lambda \tida \ldotp \conditional{\tida = \tid}{1}{0}$ for every $\tid\in\Tid$.
At each program step, these VCs are maintained as follows.
When thread $\tid$ performs an access of type $\typ \in \set{\lR,\lW,\lU}$ to an atomic location $\loc$
with access mode $\mod \in \set{\rlx,\acq,\rel,\acqrel}$,
we execute the following updates:

\begin{multicols}{2}
\begin{algorithmic}[1]
	\If {$\typ\in\set{\lR,\lU}$}
		\If {$\mod \sqsupseteq \acq$}
			\State $\vcTC(\tid) \sqassn \vWt(\loc)$
		\Else
			\State $\vcTA(\tid) \sqassn \vWt(\loc)$
		\EndIf
	\EndIf
	\If {$\typ=\lW$}
		\If {$\mod \sqsupseteq \rel$}
			\State $\vWt(\loc) \eqassn \vcTC(\tid)$
		\Else
			\State $\vWt(\loc) \eqassn \vcTR(\tid)$
		\EndIf
	\EndIf
	\Statex{}
	\Statex{}
	\If {$\typ=\lU$}
		\If {$\mod \sqsupseteq \rel$}
			\State $\vWt(\loc) \sqassn \vcTC(\tid)$
		\Else
			\State $\vWt(\loc) \sqassn \vcTR(\tid)$
		\EndIf
	\EndIf
	\If {$\typ\in\set{\lW,\lU} \land \mod \sqsupseteq \rel$}
		\State $\vcTC(\tid)(\tid) \eqassn \vcTC(\tid)(\tid) + 1$
	\EndIf
\end{algorithmic}
\end{multicols}

When $\tid$ performs a fence with mode $\mod \in \set{\acq,\rel,\acqrel}$, we execute the following updates:

\begin{multicols}{2}
\begin{algorithmic}[1]
	\If {$\mod \sqsupseteq \acq$}
		\State $\vcTC(\tid) \sqassn \vcTA(\tid)$
		\Statex
	\EndIf
	\If {$\mod \sqsupseteq \rel$}
		\State $\vcTR(\tid) \eqassn \vcTC(\tid)$
		\State $\vcTC(\tid)(\tid) \eqassn \vcTC(\tid)(\tid) + 1$
	\EndIf
\end{algorithmic}
\end{multicols}

In addition, $\sco$-fences are modeled as sequences of instructions 
$$\fenceInst{\acq} ; \faddInst{\floc}{0}{\acqrel} ; \fenceInst{\rel}$$
where $\floc$ is a distinguished otherwise-unused location.

Finally, when thread $\tid$ accesses a non-atomic location $\naloc$ with access type $\typ\in \set{\lR,\lW}$ 
we run the following algorithm. If the assertion does not hold, we report a data race on $\naloc$.

\begin{multicols}{2}
\begin{algorithmic}[1]
	\State \Call{Assert}{$\vcNAW(\naloc) \sqsubseteq \vcTC(\tid)$}
	\If {$\typ = \lR$}
		\State $\vcNAR(\naloc)(\tid) \eqassn \vcTC(\tid)(\tid)$
	\Else
		\State \Call{Assert}{$\vcNAR(\naloc) \sqsubseteq \vcTC(\tid)$}
		\State $\vcNAW(\naloc)(\tid) \eqassn \vcTC(\tid)(\tid)$
	\EndIf
\end{algorithmic}
\end{multicols}

\paragraph{Correctness}
The correctness of the above algorithm follows from the following theorem:

\begin{theorem}[WR \& WW-races]
Let $G$ be an execution graph,
$\tida\in\Tid$, $\naloc \in \NALoc$,
and $e$ be a $G.\lPO$-maximal event.
Suppose that $e \in G.\sR_\naloc \cup G.\sW_\naloc$.
Define:
$E_1= G.\lE^{\tida, \sqsupseteq\rel} \cap \dom{G.\lPO\seq[\sW_\naloc]}$,
$E_2  =G.\lE^{\tida, \sqsupseteq\rel} \cap \dom{G.\lHB^?\seq [e]}$,
$\ts_1 = \size{E_1} + ( \conditional{G.\sW_\naloc^\tida \neq\emptyset}{1}{0})$,
and
$\ts_2 = \size{E_2} + (\conditional{\tida = \lTID(e)}{1}{0})$.
Then, the following hold:
\begin{description}
\item[Soundness:] If $e$ does not participate in a race in $G$,
then $\ts_1 \leq \ts_2$.
\item[Completeness:] If $e$ races with some $w^\tida\in G.\sW_\naloc^\tida$ in $G$,
then $\ts_2 < \ts_1$.
\end{description}
\end{theorem}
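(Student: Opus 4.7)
I will reduce both parts of the theorem to a structural comparison of $E_1$ and $E_2$, exploiting that the release events of thread $\tida$ form a $G.\lPO$-chain, and both $E_1$ and $E_2$ are $G.\lPO$-downward-closed prefixes of this chain (by transitivity of $G.\lPO$ for $E_1$, and by $G.\lPO \subseteq G.\lHB$ for $E_2$). Hence it suffices to compare the $G.\lPO$-maximal elements of each set and carefully track the ``$+1$'' offsets. Throughout, write $\tid = \lTID(e)$. Before the case analysis I will establish an auxiliary lemma: because $e$ is a non-atomic access and $G.\lPO$-maximal, no $e' \neq e$ satisfies $\tup{e,e'}\in G.\lHB$; indeed, $G.\lHB = (G.\lPO \cup G.\lSW)^+$ in the \RLX model, non-atomic accesses never appear as the source of a $G.\lSW$-edge, so any outgoing $G.\lHB$-edge from $e$ must begin with a $G.\lPO$-step, contradicting $G.\lPO$-maximality.

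\textbf{Soundness.}
I split on whether $\tida = \tid$.
When $\tida = \tid$, $G.\lPO$-maximality of $e$ gives $\tup{w,e}\in G.\lPO^?$ for every $w \in G.\sW_\naloc^\tida$, so any release $r$ with $\tup{r,w}\in G.\lPO$ satisfies $\tup{r,e}\in G.\lHB$; hence $E_1 \subseteq E_2$, and the two ``$+1$'' offsets agree (both apply iff $G.\sW_\naloc^\tida \neq \emptyset$, and otherwise $E_1 = \emptyset$), yielding $\ts_1 \leq \ts_2$.
When $\tida \neq \tid$ and $G.\sW_\naloc^\tida \neq \emptyset$, let $w_\star$ be the $G.\lPO$-maximal write to $\naloc$ in $\tida$'s thread.
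The no-race assumption and the auxiliary lemma force $\tup{w_\star,e}\in G.\lHB$; since this path must cross threads while $w_\star$ is non-atomic (hence not a release), it decomposes as $\tup{w_\star,r^\ast}\in G.\lPO$ followed by a $G.\lSW$-edge leaving $\tida$ and then further $G.\lHB^?$ steps to $e$, for some release $r^\ast$ in $\tida$'s thread strictly $G.\lPO$-after $w_\star$.
Then $r^\ast \in E_2\setminus E_1$ (it is $G.\lPO$-after the last write to $\naloc$ in $\tida$), and combined with $E_1 \subseteq E_2$ (by the same $G.\lPO$-then-$G.\lHB$ composition), this gives $|E_2|\geq |E_1|+1 = \ts_1 = \ts_2$.

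\textbf{Completeness and main obstacle.}
For completeness, the case $\tida = \tid$ is vacuous: $G.\lPO$-maximality makes any $w^\tida \neq e$ satisfy $\tup{w^\tida,e}\in G.\lPO \subseteq G.\lHB$, precluding a race.
When $\tida \neq \tid$ and $e$ races with some $w^\tida \in G.\sW_\naloc^\tida$, I claim $E_2 \subseteq E_1$:
if some $r' \in E_2$ were $G.\lPO$-at-or-after $w^\tida$, then $\tup{w^\tida,r'}\in G.\lPO$ (strict, since $r'$ is a release and $w^\tida$ is not) together with $\tup{r',e}\in G.\lHB^?$ would give $\tup{w^\tida,e}\in G.\lHB$, contradicting the race; hence every $r' \in E_2$ is $G.\lPO$-strictly-before $w^\tida$, placing $r'$ in $E_1$, and $\ts_2 = |E_2| \leq |E_1| < |E_1|+1 = \ts_1$.
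I expect the main obstacle to be the soundness case $\tida \neq \tid$: extracting the witness release $r^\ast$ requires inspecting the $G.\lSW$-decomposition of a $G.\lHB$-path and crucially uses the auxiliary lemma to rule out $w_\star$ itself serving as the cross-thread jumping point.
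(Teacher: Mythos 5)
Your proof is correct and takes essentially the same route as the paper's: soundness by arguing that the $G.\lPO$-maximal write of thread $\tida$ to $\naloc$ must be $G.\lHB$-before $e$ and then extracting a release event of $\tida$ in $E_2 \setminus E_1$ (alongside $E_1 \subseteq E_2$), and completeness by showing $E_2 \subseteq E_1$ exactly as in the paper. The differences are cosmetic only: you spell out the auxiliary fact that a $G.\lPO$-maximal non-atomic event has no outgoing $G.\lHB$-edge and the $\lSW$-decomposition of the cross-thread path (both left implicit in the paper), while omitting the trivial soundness subcase $\tida \neq \tid$ with $G.\sW_\naloc^\tida = \emptyset$ (where $E_1 = \emptyset$ and $\ts_1 = 0$), which the paper dispatches via ``if $\ts_1 = 0$, we are done,'' and your parenthetical that the two $+1$ offsets ``agree'' when $\tida = \tid$ is imprecise ($\ts_2$'s offset always applies there) but only makes the needed inequality easier.
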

\begin{proof}
For soundness, notice first, if $\tida = \lTID(e)$, then $E_1 \suq E_2$, and thus $\ts_1 \leq \ts_2$.
Second, if $\ts_1 = 0$, then we are also done.
Otherwise, $G.\sW_\naloc^\tida \neq \emptyset$.
Let $w^\tida$ be a $G.\lPO$-maximal event in $G.\sW_\naloc^\tida$.
We claim that $w^\tida$ races with $e$ in $G$.
Suppose otherwise.
Then, $\tup{w^\tida,e}\in G.\lHB$.
It follows that $E_1 \suq E_2$
and, moreover, some event $e^\tida\in G.\lE^{\tida, \sqsupseteq\rel}$ 
has $\tup{e^\tida,e}\in G.\lHB$ and $\tup{w^\tida,e^\tida}\in G.\lPO$.
Then, $e^\tida\in E_2 \setminus E_1$,
and so $E_1 \subsetneq E_2$.
Hence, $\ts_1 = \size{E_1} + 1 \leq \size{E_2} = \ts_2$.

For completeness, we claim that $E_2 \suq E_1$.
Let $e^\tida \in E_2$. Then, $e^\tida \in G.\lE^{\tida, \sqsupseteq\rel} $
and $\tup{e^\tida,e}\in G.\lHB^?$.
Now, if  $\tup{w^\tida,e^\tida}\in G.\lPO$, then $\tup{w^\tida,e}\in G.\lHB$, 
which contradicts the fact that $e$ races with $w^\tida$ in $G$.
Hence, we have $\tup{e^\tida,w^\tida}\in G.\lPO$, and so $e^\tida\in E_1$.
It follows that $\size{E_2} \leq \size{E_1}$.
Moreover, $G.\sW_\naloc^\tida\neq \emptyset$ and $\tid\neq \tida$
(since $e$ races with $w^\tida$ in $G$).
Therefore, $\ts_2 = \size{E_2} <  \size{E_1} + 1 = \ts_1$.
\end{proof}

Similarly, the following holds:

\begin{theorem}[RW-races]
Let $G$ be an execution graph,
$\tida\in\Tid$, $\naloc \in \NALoc$,
and $e$ be a $G.\lPO$-maximal event.
Suppose that $e \in G.\sW_\naloc$.
Define:
$E_1= G.\lE^{\tida, \sqsupseteq\rel} \cap \dom{G.\lPO\seq[\sR_\naloc]}$,
$E_2  =G.\lE^{\tida, \sqsupseteq\rel} \cap \dom{G.\lHB^?\seq [e]}$,
$\ts_1 = \size{E_1} + ( \conditional{G.\sR_\naloc^\tida \neq \emptyset}{1}{0})$,
and
$\ts_2 = \size{E_2} + (\conditional{\tida = \lTID(e)}{1}{0})$.
Then, the following hold:
\begin{description}
\item[Soundness:] If $e$ does not participate in a race in $G$,
then $\ts_1 \leq \ts_2$.
\item[Completeness:] If $e$ races with some $r^\tida\in G.\sR_\naloc^\tida$ in $G$,
then $\ts_2 < \ts_1$.
\end{description}
\end{theorem}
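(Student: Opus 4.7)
The proof will closely mirror the argument given for the WR \& WW-races theorem just above, with $\sW_\naloc$ replaced throughout by $\sR_\naloc$ and ``racing with some write $w^\tida$'' replaced by ``racing with some read $r^\tida$''. The structural reason is that the race relation is symmetric in the access types as far as HB-orderedness is concerned, so we need only to be careful about the $+1$ bookkeeping and about which direction of $\lHB$ can hold given that $e$ is $\lPO$-maximal.

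For \textbf{soundness}, I would proceed in three cases, exactly as in the prior proof. First, if $\tida = \lTID(e)$, then any $e^\tida \in E_1$ is a $\tida$-event $\lPO$-before some $r \in \sR_\naloc$, so $e^\tida \neq e$ and (since $e$ is $\lPO$-maximal in thread $\tida$) we have $\tup{e^\tida, e} \in G.\lPO \suq G.\lHB$; hence $E_1 \suq E_2$, and since both $\ts_1$ and $\ts_2$ carry a $+1$ in this situation, $\ts_1 \leq \ts_2$. Second, if $\ts_1 = 0$ we are immediately done. Otherwise $G.\sR_\naloc^\tida \neq \emptyset$ and we pick a $G.\lPO$-maximal $r^\tida \in G.\sR_\naloc^\tida$. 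If $e$ does not race with $r^\tida$ in $G$, then $\tup{r^\tida, e} \in G.\lHB$ (the other direction is ruled out because $e$ is $G.\lPO$-maximal and a write, so any $G.\lHB$-path from $e$ to $r^\tida$ would have to travel via $G.\lRF$ from $e$, forcing $r^\tida$ itself to witness a $G.\lHB$-edge back to $e$'s thread, contradicting maximality). From $\tup{r^\tida, e} \in G.\lHB$ it follows that every $e^\tida \in E_1$ satisfies $\tup{e^\tida, e} \in G.\lHB$, so $E_1 \suq E_2$. Moreover $r^\tida$ itself (being a release-or-stronger event only if its mode qualifies --- otherwise the $+1$ on the $\ts_1$ side is compensated by a corresponding element in $E_2$) will supply the extra element needed, yielding $\ts_1 \leq \ts_2$.

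For \textbf{completeness}, I claim $E_2 \suq E_1$. Take $e^\tida \in E_2$, so $e^\tida \in G.\lE^{\tida, \sqsupseteq \rel}$ and $\tup{e^\tida, e} \in G.\lHB^?$. If $\tup{r^\tida, e^\tida} \in G.\lPO$, then $\tup{r^\tida, e} \in G.\lHB$, contradicting the assumption that $e$ races with $r^\tida$. Since both events are in thread $\tida$, the only alternative is $\tup{e^\tida, r^\tida} \in G.\lPO$, placing $e^\tida \in E_1$. Therefore $\size{E_2} \leq \size{E_1}$. Finally, because $e$ races with some $r^\tida$ we must have $\tida \neq \lTID(e)$, so $\ts_2$ does not get the $+1$, while $\ts_1$ does (since $G.\sR_\naloc^\tida \neq \emptyset$). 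Hence $\ts_2 < \ts_1$.

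The main obstacle will be the fine point in the soundness case where $e$ is a write and $r^\tida$ is a read of the same location, as we need to exclude $\tup{e, r^\tida} \in G.\lHB$; this is not automatic from $\lPO$-maximality alone, but is ruled out because any such path would force $r^\tida$ to read from $e$ (or a later write), which in turn produces a $G.\lHB$-edge contradicting the racing hypothesis. Once this detail is settled, the rest of the argument is a direct transposition of the preceding proof.
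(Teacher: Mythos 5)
Your proposal follows the same route as the paper, which proves the WR/WW case in full and presents this theorem as its symmetric transposition, and your completeness argument is a faithful, correct instance of that transposition. Your soundness argument, however, contains a step that fails as written: in the third case you claim that ``$r^\tida$ itself \ldots{} will supply the extra element'' of $E_2 \setminus E_1$. It cannot. $r^\tida$ is a non-atomic access, so its mode is $\na$ and never $\sqsupseteq\rel$; hence $r^\tida \nin G.\lE^{\tida,\sqsupseteq\rel}$ and in particular $r^\tida \nin E_2$. The strict count $\size{E_1}+1 \leq \size{E_2}$ is exactly what soundness requires in this case (here $\tida \neq \lTID(e)$, so $\ts_1 = \size{E_1}+1$ while $\ts_2 = \size{E_2}$), so the extra element must actually be exhibited, as in the paper's WR/WW proof: from $\tup{r^\tida,e}\in G.\lHB$, with $r^\tida$ non-atomic and $e$ in a different thread, the $\lHB$-path must leave thread $\tida$ through an $\lSW$-edge, whose source $e^\tida$ necessarily has mode $\sqsupseteq\rel$ and satisfies $\tup{r^\tida,e^\tida}\in G.\lPO$ (it cannot be $r^\tida$, which is non-atomic). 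Then $e^\tida\in E_2$, yet $e^\tida\nin E_1$ because $r^\tida$ was chosen $\lPO$-maximal among thread-$\tida$ reads of $\naloc$, so $e^\tida$ is $\lPO$-after every such read. Your hedged parenthetical (``compensated by a corresponding element in $E_2$'') gestures at this but never establishes existence, which is the entire content of the step.

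Separately, your closing justification for excluding $\tup{e,r^\tida}\in G.\lHB$---that such a path ``would force $r^\tida$ to read from $e$''---is spurious: an $\lHB$-path out of $e$ need not pass through any reads-from edge of $r^\tida$. The actual reason is simpler, and is why the paper's WR/WW proof does not even remark on it: $e$ is $\lPO$-maximal and non-atomic, so it has no outgoing $\lPO$-edge, and since non-atomic events never source $\lSW$ in \RLX, it has no outgoing $\lHB$-edge at all. A minor further slip: in your first case, $\ts_1$ carries the $+1$ only when $G.\sR_\naloc^\tida\neq\emptyset$, not unconditionally, though the inequality $\ts_1\leq\ts_2$ goes through either way. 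With these repairs your proof coincides with the paper's intended argument.
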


\fi
\end{document}